%
% nag when using outdated macros
\RequirePackage[l2tabu,orthodox]{nag}
\documentclass
[11pt,letterpaper]
{article} 
%\parskip=1ex
% SETUP

% {{{ etex }}}
\usepackage[dvipsnames]{xcolor}
\usepackage[notes=true,later=false,camera=true]{dtrt}
\usepackage[utf8]{inputenc}
\usepackage{etex}
\usepackage{ stmaryrd }

% {{{ common }}}
\usepackage{xspace,enumerate}
\usepackage[T1]{fontenc}
\usepackage[full]{textcomp}
% {{{ babelamerican }}}
\usepackage[american]{babel}
% {{{ mathtools }}}
\usepackage{mathtools}
% {{{ boldmath }}}
% fix for "too many math alphabets" problem
 % default 3
% \usepackage{bm}
% \usepackage{stmaryrd}
% {{{ amsthm }}}

\usepackage{amsthm}
\usepackage{empheq}

\usepackage{hyperref}
\hypersetup{hyperindex=true,pdfpagemode=UseOutlines,bookmarksnumbered=true,bookmarksopen=true,bookmarksopenlevel=2,pdfstartview=FitH,pdfborder={0 0 1},linkbordercolor=dt@linkcolor,citebordercolor=dt@linkcolor,urlbordercolor=dt@linkcolor}
\usepackage[capitalise,nameinlink]{cleveref}

\crefname{lemma}{Lemma}{Lemmas}
\crefname{fact}{Fact}{Facts}
\crefname{theorem}{Theorem}{Theorems}
\crefname{mtheorem}{Theorem}{Theorems}
\crefname{corollary}{Corollary}{Corollaries}
\crefname{claim}{Claim}{Claims}
\crefname{example}{Example}{Examples}
\crefname{algorithm}{Algorithm}{Algorithms}
\crefname{problem}{Problem}{Problems}
\crefname{definition}{Definition}{Definitions}
\crefname{property}{Property}{Properties}
\usepackage{paralist}
\usepackage{turnstile}
\usepackage{mdframed}
\usepackage{tikz}
\usepackage{caption}
\DeclareCaptionType{Algorithm}
\usepackage{newfloat}
\newtheorem{theorem}{Theorem}[section]
\newtheorem{mtheorem}{Theorem}%[section]
\newtheorem*{theorem*}{Theorem}

\newtheorem{proposition}[theorem]{Proposition}
\newtheorem*{proposition*}{Proposition}
\newtheorem{lemma}[theorem]{Lemma}
\newtheorem*{lemma*}{Lemma}

\newtheorem*{conjecture*}{Conjecture}

\newtheorem*{fact*}{Fact}

\newtheorem*{hypothesis*}{Hypothesis}

\theoremstyle{definition}
\newtheorem{definition}[theorem]{Definition}
\newtheorem*{definition*}{Definition}

\Crefname{customthm}{Theorem}{Theorems}

\theoremstyle{remark}
\newtheorem{claim}[theorem]{Claim}
\newtheorem*{claim*}{Claim}
\newtheorem{remark}[theorem]{Remark}
\newtheorem*{remark*}{Remark}

\newtheorem*{observation*}{Observation}

\newcommand{\Bigast}{\mathop{\scalebox{1.2}{\raisebox{-0.3ex}{$\ast$}}}}%

% {{{ geometry-nice }}}
\usepackage[
letterpaper,
top=1.2in,
bottom=1.2in,
left=1in,
right=1in]{geometry}
% {{{ newpx }}}
\usepackage{newpxtext} % T1, lining figures in math, osf in text
\usepackage{textcomp} % required for special glyphs
\usepackage[varg,bigdelims]{newpxmath}
\usepackage[scr=rsfso]{mathalfa}% \mathscr is fancier than \mathcal
\usepackage{bm} % load after all math to give access to bold math
% \useosf %no longer needed
\linespread{1.1}% Give Palatino more leading (space between lines)
\let\mathbb\varmathbb
% {{{ microtype }}}
\usepackage{microtype}

%%%%%%%%%%%%%%%%%%%%%%%%%%%%%%%%%%%%%%%%%%%%%%%%%%%%%%%%%%%%%%%%%%%%%%%%%%%%%%%%
% NOTES

%%%%%%%%%%%%%%%%%%%%%%%%%%%%%%%%%%%%%%%%%%%%%%%%%%%%%%%%%%%%%%%%%%%%%%%%%%%%%%%%

%%%%%%%%%%%%%%%%%%%%%%%%%%%%%%%%%%%%%%%%%%%%%%%%%%%%%%%%%%%%%%%%%%%%%%%%%%%%%%%%
% bibliography settings
\usepackage[backend=bibtex8,style=alphabetic,maxnames=99,sorting=anyt, minalphanames=5,maxalphanames=6,maxbibnames=99]{biblatex}
  \bibliography{sparse-rip}
%\renewcommand*{\bibfont}{\small}
%%%%%%%%%%%%%%%%%%%%%%%%%%%%%%%%%%%%%%%%%%%%%%%%%%%%%%%%%%%%%%%%%%%%%%%%%%%%%%%%

%%%%%%%%%%%%%%%%%%%%%%%%%%%%%%%%%%%%%%%%%%%%%%%%%%%%%%%%%%%%%%%%%%%%%%%%%%%%%%%%
% FORMATTING
\allowdisplaybreaks
\newcommand{\FormatAuthor}[3]{
	\begin{tabular}{c}
		#1 \\ {\small\texttt{#2}} \\ {\small #3}
	\end{tabular}
}

\newcommand{\keywords}[1]{\bigskip\par\noindent{\footnotesize\textbf{Keywords\/}: #1}}

\DeclareMathOperator{\eperiod}{\,\,\text{.}}
\DeclareMathOperator{\ecomma}{\,\,\text{,}}
%%%%%%%%%%%%%%%%%%%%%%%%%%%%%%%%%%%%%%%%%%%%%%%%%%%%%%%%%%%%%%%%%%%%%%%%%%%%%%%%

%%%%%%%%%%%%%%%%%%%%%%%%%%%%%%%%%%%%%%%%%%%%%%%%%%%%%%%%%%%%%%%%%%%%%%%%%%%%%%%%
% MACROS
\newcommand{\R}{{\mathbb R}}
\newcommand{\N}{{\mathbb N}}
\newcommand{\norm}[1]{\left\| #1 \right\|}

\newcommand{\abs}[1]{\lvert #1 \rvert}

\newcommand{\eps}{\varepsilon}

\newcommand{\F}{{\mathbb F}}

\newcommand{\E}{{\mathbb E}}

\newcommand{\ip}[1]{\langle #1 \rangle}

\newcommand{\tr}{\mathrm{tr}}

\newcommand{\ceil}[1]{\lceil #1 \rceil}

\DeclareMathOperator{\rank}{rank}%
\newcommand{\inparen}[1]{\left(#1\right)}
\newcommand{\inset}[1]{\left\{#1\right\}}
\newcommand{\deffont}[1]{\textnormal{\textsf{#1}}} 
\DeclareMathOperator{\supp}{supp}%
\newcommand{\inabs}[1]{\left|#1\right|}

\newcommand{\wt}{\text{wt}}

\newcommand{\polylog}{\mathrm{polylog}}

\let\svthefootnote\thefootnote
\newcommand\blfootnote[1]{%
	\let\thefootnote\relax%
	\footnotetext{#1}%
	\let\thefootnote\svthefootnote%
}
%%%%%%%%%%%%%%%%%%%%%%%%%%%%%%%%%%%%%%%%%%%%%%%%%%%%%%%%%%%%%%%%%%%%%%%%%%%%%%%%

%%%%%%%%%%%%%%%%%%%%%%%%%%%%%%%%%%%%%%%%%%%%%%%%%%%%%%%%%%%%%%%%%%%%%%%%%%%%%%%%
%%%%%%%%%%%%%%%%%%%%%%%%%%%%%%%%%%%%%%%%%%%%%%%%%%%%%%%%%%%%%%%%%%%%%%%%%%%%%%%%
%%%%%%%%%%%%%%%%%%%%%%%%%%%%%%%%%%%%%%%%%%%%%%%%%%%%%%%%%%%%%%%%%%%%%%%%%%%%%%%%
\begin{document}
	%%%%%%%%%%%%%%%%%%%%%%%%%%%%%%%%%%%%%%%%%%%%%%%%%%%%%%%%%%%%%%%%%%%%%%%%%%%%%%%%
	%%%%%%%%%%%%%%%%%%%%%%%%%%%%%%%%%%%%%%%%%%%%%%%%%%%%%%%%%%%%%%%%%%%%%%%%%%%%%%%%
	%%%%%%%%%%%%%%%%%%%%%%%%%%%%%%%%%%%%%%%%%%%%%%%%%%%%%%%%%%%%%%%%%%%%%%%%%%%%%%%%

	%%%%%%%%%%%%%%%%%%%%%%%%%%%%%%%%%%%%%%%%%%%%%%%%%%%%%%%%%%%%%%%%%%%%%%%%%%%%%%%%
	\title{Sparsity and $\ell_p$-Restricted Isometry}
	
 	\author{
 		\begin{tabular}[h!]{ccc}
 			\FormatAuthor{Venkatesan Guruswami\thanks{Supported in part by NSF grants CCF-1908125 and CCF-2210823, and a Simons Investigator Award.}}{venkatg@berkeley.edu}{UC Berkeley}
 			\FormatAuthor{Peter Manohar\thanks{Supported in part by an ARCS Scholarship, NSF Graduate Research Fellowship (under grant numbers DGE1745016 and DGE2140739), and NSF CCF-1814603.}}{pmanohar@cs.cmu.edu}{Carnegie Mellon University}
 			\FormatAuthor{Jonathan Mosheiff\thanks{Supported in part by NSF CCF-1814603.}}{mosheiff@bgu.ac.il}{Ben-Gurion University} 
 		\end{tabular}
 	} 
	\date{}
	%%%%%%%%%%%%%%%%%%%%%%%%%%%%%%%%%%%%%%%%%%%%%%%%%%%%%%%%%%%%%%%%%%%%%%%%%%%%%%%%

	%%%%%%%%%%%%%%%%%%%%%%%%%%%%%%%%%%%%%%%%%%%%%%%%%%%%%%%%%%%%%%%%%%%%%%%%%%%%%%%%
	\maketitle\blfootnote{Any opinions, findings, and conclusions or recommendations expressed in this material are those of the author(s) and do not necessarily reflect the views of the National Science Foundation.}
	\thispagestyle{empty}
	%%%%%%%%%%%%%%%%%%%%%%%%%%%%%%%%%%%%%%%%%%%%%%%%%%%%%%%%%%%%%%%%%%%%%%%%%%%%%%%%

	%%%%%%%%%%%%%%%%%%%%%%%%%%%%%%%%%%%%%%%%%%%%%%%%%%%%%%%%%%%%%%%%%%%%%%%%%%%%%%%%
	%%%%%%%%%%%%%%%%%%%%%%%%%%%%%%%%%%%%%%%%%%%%%%%%%%%%%%%%%%%%%%%%%%%%%%%%%%%%%%%%
	%%%%%%%%%%%%%%%%%%%%%%%%%%%%%%%%%%%%%%%%%%%%%%%%%%%%%%%%%%%%%%%%%%%%%%%%%%%%%%%%
	\begin{abstract}
		A matrix $A$ is said to have the $\ell_p$-Restricted Isometry Property ($\ell_p$-RIP) if for all vectors $x$ of up to some sparsity $k$, $\norm{Ax}_p$ is roughly proportional to $\norm{x}_p$.  We study this property for $m \times n$ matrices of rank proportional to $n$ and $k = \Theta(n)$. 
    In this parameter regime, $\ell_p$-RIP matrices are closely connected to Euclidean sections, and are ``real analogs'' of testing matrices for locally testable codes.
		
		It is known that with high probability, random dense $m\times n$ matrices (e.g., with i.i.d.\ $\pm 1$ entries) are $\ell_2$-RIP with $k \approx m/\log n$, and sparse random matrices are $\ell_p$-RIP for $p \in [1,2)$ when $k, m = \Theta(n)$.
		However, when $m = \Theta(n)$, sparse random matrices are known to \emph{not} be $\ell_2$-RIP with high probability.
		
		\medskip
Against	this backdrop, we show that sparse matrices \emph{cannot be} $\ell_2$-RIP in our parameter regime. On the other hand, for $p \neq 2$, we show that every $\ell_p$-RIP matrix \emph{must} be sparse.
		Thus, sparsity is incompatible with $\ell_2$-RIP, but necessary for $\ell_p$-RIP for $p \neq 2$. 
		
  Under a suitable interpretation, our negative result about $\ell_2$-RIP gives an impossibility result for a certain continuous analog of ``$c^3$-LTCs''---locally testable codes of constant rate, constant distance and constant locality that were constructed in recent breakthroughs.
		
		\keywords{Restricted Isometry Property, Sparse Matrices, Spread Subspaces, Euclidean Sections, Compressed sensing, Locally Testable Codes}
	\end{abstract}
	\pagestyle{plain}
	%\newpage
	
	%%%%%%%%%%%%%%%%%%%%%%%%%%%%%%%%%%%%%%%%%%%%%%%%%%%%%%%%%%%%%%%%%%%%%%%%%%%%%%%%
	%%%%%%%%%%%%%%%%%%%%%%%%%%%%%%%%%%%%%%%%%%%%%%%%%%%%%%%%%%%%%%%%%%%%%%%%%%%%%%%%
	%%%%%%%%%%%%%%%%%%%%%%%%%%%%%%%%%%%%%%%%%%%%%%%%%%%%%%%%%%%%%%%%%%%%%%%%%%%%%%%%
	\newpage
	\section{Introduction}\label{sec:intro}

	A random (dense) $0.01n \times n$ matrix $A$ with independent $\pm 1$ entries acts on all $o(n)$-sparse\footnote{A vector $x$ is $k$-sparse if the number of nonzero entries in $x$ is at most $k$.} vectors approximately isometrically. That is, $\norm{Ax}_2 \approx \sqrt{0.01n}\norm{x}_2$ for all $o(n)$-sparse vectors $x$. Such a matrix $A$ is said to satisfy the Restricted Isometry Property (RIP) for the $\ell_2$-norm. More generally, one can extend the definition of RIP to any $\ell_p$-norm.
	\begin{definition}\label{def:RIP}
		Let $p \geq 1$. A matrix $A \in \R^{m \times n}$ is \deffont{$(k, D)$-$\ell_p$-RIP} 
		if there exists $K > 0$ such that for every $k$-sparse $x \in \R^n$, it holds that 
		\begin{equation*}
			K\norm{x}_p \leq \norm{Ax}_p \leq D \cdot K\norm{x}_p \enspace.
		\end{equation*}
	\end{definition} 
	The original rise to prominence of the RIP is due to connections to compressed sensing~\cite{Donoho} unearthed in several works~\cite{CandesT05,CandesT06,CandesRT06}, where it was referred to as the Uniform Uncertainty Principle (UUP). The $\ell_p$-RIP can be used to achieve the so-called ``$\ell_p$-$\ell_1$'' guarantee for the sparse recovery problem: if $A$ is $(k,1+\eps)$-$\ell_p$-RIP and $x$ is $\delta$-close in $\ell_1$-distance to a sparse vector $y$, then one can approximately recover $x$ given the noisy measurement $Ax + b$, where $A$ is the measurement matrix and $b$ is the noise vector. Formally, one has the guarantee that the recovered $\hat{x}$ satisfies $\norm{x - \hat{x}}_p \leq O(k^{-(1- \frac{1}{p})}\delta + \norm{b}_p)$ \cite{ZhuGR15}.
	
	The most well-studied case of $\ell_p$-RIP is for $p = 2$. When $p = 2$, the RIP is equivalent to saying that the eigenvalues of $A_S^{\top} A_S$ are roughly equal for any small set $S \subseteq [n]$, where $A_S$ denotes the restriction of $A$ to the columns in $S$. It is well-known that for $m \geq O(k \log (n/k))$, a (dense) random $m \times n$ matrix with independent $\pm 1$ entries, or more generally any distribution that has the JL property (e.g., a subgaussian distribution)~\cite{BaranuikDDW08}, is $(k, O(1))$-$\ell_2$-RIP.
	
	The RIP also has close connections to \emph{spread subspaces}, an analog of error-correcting codes over the reals, as well as to Euclidean sections. A simple proof shows that the kernel of an $(\Omega(n), O(1))$-$\ell_p$-RIP matrix is a subspace with the ``$\ell_p$-spread property'': any $x \in \ker(A)$ with $\|x\|_p=1$ is $\Omega(1)$-far in $\ell_p$-distance from all $o(n)$-sparse vectors \cite[Prop.\ 3.8]{GMM22}. For $p = 2$, this corresponds to $\ker(A)$ being a good Euclidean section of $\ell_1$, a notion that has been studied in classical works~\cite{FLM77,kashin,garnaev-gluskin}, and more recently in \cite{KT07,GLW08,GLR10,Karnin11,GMM22}.	
	
	RIP matrices can also be interpreted as continuous analogs for testing matrices of linear locally testable codes. The testing matrix of a code is simply a matrix $A$ where each row corresponds to one of the possible linear local tests performed by the tester. 
 The code equals $\ker(A)$ and thus has dimension $n - \rank(A)$.
 The number of queries made by the tester is the row sparsity of $A$, The probability that $x$ fails the test is thus proportional to the Hamming weight $\wt(Ax)$ of the ``syndrome" $Ax$. In particular, the testing matrix $A$ of a (strong) locally testable code of constant rate $\rho$, constant locality $q$ and constant distance $\delta$ (``$c^3$-LTC'') will have rank $(1- \rho)n$, row sparsity $q$, and $\text{wt}(Ax) \geq \beta \cdot \wt(x)$ for all $x$ with $\wt(x) \leq \delta n /2$, where $\beta \in (0,1)$ is a constant. 
	Such codes were recently constructed in two breakthrough works \cite{DELLM22,PK22}. 

 In the continuous analog of codes, 
 the testing matrix $A$ for a $c^3$-LTC is analogous to an $(\Omega(n),O(1))$-$\ell_p$-RIP matrix $A$ of rank $\alpha n$ (for a constant $\alpha \in (0,1)$) whose rows have constant sparsity. The ``code" is $\ker(A)$ and its ``distance" corresponds to the fact that $\ker(A)$ is well-spread and in particular has no $\delta n$-sparse vectors. A testing matrix  should then, in particular, satisfy the following:  for all nonzero $x$ which are $\delta n/2$-sparse, the ``syndrome" $Ax$ has sizeable norm, specifically $\|Ax\|_p \geq \beta \|x\|_p$. Thus the analogy between RIP matrices and LTC testing matrices is achieved by replacing the finite field $\F_2$ with $\R$, and by replacing the Hamming ``metric'' $\wt(\cdot)$ with the $\ell_p$-norm $\norm{\cdot}_p$.\footnote{One may observe that the analogy only requires the lower bound in the RIP. However, the upper bound in the RIP also holds, as it is a simple consequence of the constant row sparsity, as the entries of $A$ are bounded by a constant.}
	
	In this work we study the sparsity of $(k, D)$-$\ell_p$-RIP matrices $A\in \R^{m\times n}$ in the regime where $k \ge \Omega(n)$, $D = O(1)$, and $\rank(A) \ge \Omega(n)$. This parameter setting of $k$, $D$, and $\rank(A)$ is naturally induced by the aforementioned relation of the RIP to $\ell_p$-spread subspaces and to testing of codes, but is less common in the context of compressed sensing and the JL property, where one typically has $k = o(n)$ and $\rank(A) \leq m = o(n)$. Sparsity naturally arises from the connection to testing of codes, where the testing matrix must be row sparse, as well as the connection to Euclidean sections/$\ell_p$-spread subspaces, where the best known explicit constructions, for all $p \in [1,2]$, come from the kernels of sparse matrices \cite{GLW08,GLR10, Karnin11}. For $p \ne 2$, it is known that random \emph{sparse} $m \times n$ matrices are $\ell_p$-RIP \cite{ZhuGR15,GMM22}, and explicit constructions, for $p \in [1,2]$, are also sparse \cite{BerindeGI+08,GMM22}. However, \cite{GMM22} also show that such matrices are \emph{not} $\ell_2$-RIP in the aforementioned parameter regime, when $k \geq \Omega(n)$ and $D = O(1)$.
	
	In this work, we thus ask the following questions: are $\ell_2$-RIP matrices necessarily dense? And is sparsity inherent to $\ell_p$-RIP matrices? 
	
	%%%%%%%%%%%%%%%%%%%%%%%%%%%%%%%%%%%%%%%%%%%%%%%%%%%%%%%%%%%%%%%%%%%%%%%%%%%%%%%%
	%%%%%%%%%%%%%%%%%%%%%%%%%%%%%%%%%%%%%%%%%%%%%%%%%%%%%%%%%%%%%%%%%%%%%%%%%%%%%%%%
	\subsection{Our results}
	\label{sec:results}
	We prove two theorems about the sparsity of $\ell_p$-RIP matrices. For $p = 2$ we show that any $\ell_2$-RIP matrix must contain a large number of rows with superconstant density, and for $p < 2$ we show that the rows of $A$ must be ``analytically sparse on average''. Taken together, our results show that sparsity is incompatible for $\ell_2$-RIP, but is necessary for $\ell_p$-RIP.

    Our result for $\ell_2$-RIP is stated informally below.
 \begin{mtheorem}[Simplified \cref{thm:ell2}]\label{ithm:ell2}
	Let $m,n\in \N$. Let $A\in \R^{m\times n}$ be a matrix of rank $\alpha n$ where $0 < \alpha < 1$ is a constant. Suppose that   $A$ is $(k,D)$-$\ell_2$-RIP for some $\Omega(n) \le k \le n$ and $1\le D\le O(1)$. Then for any constant $\eta \in (0,1)$, if we let $T$ denote the matrix consisting of all rows in $A$ that are not $s$-sparse for some $s \ge \Omega\inparen{\eta \log n}$, then: \begin{inparaenum}[(1)]
 \item $T$ contains at least $\Omega(n^{1-\eta})$ rows, and
\item $\norm{T}_F^2 \geq \Omega\inparen{n^{-\eta}\cdot \norm{A}_F^2}$, where $\norm{\cdot}_F$ denotes the Frobenius norm.
 \end{inparaenum}
	\end{mtheorem}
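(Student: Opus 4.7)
The plan is a proof by contradiction. Write $A = \begin{pmatrix} L \\ T \end{pmatrix}$, where $L$ collects the $s$-sparse rows of $A$ and $T$ the remaining rows, and assume toward contradiction that either $T$ has fewer than $\Omega\inparen{n^{1-\eta}}$ rows, or $\norm{T}_F^2 < \Omega\inparen{n^{-\eta} \norm{A}_F^2}$. My aim is to derive an algebraic inequality on $s$ that forces $s = \Omega(\eta \log n)$, contradicting the hypothesis.

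\textbf{Basic RIP consequences.} Applying the RIP definition to singleton supports gives $K^2 \leq \norm{A_{\cdot,j}}_2^2 \leq D^2 K^2$ for every column $j$, hence $\norm{A}_F^2 = \Theta(n K^2)$. Combined with $\rank(A) = \alpha n$ via Cauchy--Schwarz on singular values, $\norm{A^\top A}_F^2 \geq (\tr A^\top A)^2 / (\alpha n) = \Omega(n K^4 / \alpha)$. Subtracting the diagonal mass $\sum_j M_{jj}^2 \leq n D^4 K^4$, the off-diagonal entries of $M := A^\top A$ carry at least $\Omega(n K^4)$ Frobenius mass (provided $1/\alpha$ exceeds $D^4$; for the general regime $D = O(1)$ the argument runs after replacing $M$ by a shifted $M - cI$).

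\textbf{Exploiting sparsity.} Decompose $M = L^\top L + T^\top T$. Each $s$-sparse row of $L$ contributes off-diagonal nonzeros only to pairs $(j,l)$ within its support, so the total number of off-diagonal nonzeros of $L^\top L$ is at most $m s^2$. Each such entry is bounded by $D^2 K^2$ via Cauchy--Schwarz on the columns, giving $\norm{L^\top L - \mathrm{diag}(L^\top L)}_F^2 \leq m s^2 D^4 K^4$. The $T^\top T$ off-diagonal mass is at most $\norm{T^\top T}_F^2 \leq \norm{T}_F^4$. Matching against the previous paragraph yields an inequality of the form $m s^2 + \norm{T}_F^4 / K^4 = \Omega(n)$. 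Invoking $m = O(n)$ and the hypothesized bound on $\norm{T}_F^2$, this already forces $s = \Omega(1)$ for $\eta$ close to $1$, but falls short of the claimed logarithmic bound when $\eta$ is small.

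\textbf{Main obstacle: iterating to $\log n$.} To upgrade to $s = \Omega(\eta \log n)$ and cover small $\eta$, I would iterate the preceding second-moment argument through the $2p$-th trace moment $\tr((M - cI)^{2p})$ for $p = \Theta(\log n)$ and a suitable shift $c$. The rank deficit contributes $(1 - \alpha) n \cdot c^{2p}$ to this trace via the $-c$ eigenvalue on $\ker(A)$, while the sparse structure of $L^\top L$ controls the trace from above by a closed-walk count of length $2p$ in the ``row-intersection graph'' of $L$, producing a bound of the form $n \cdot (m s^2 / n)^p \cdot \mathrm{poly}(D, K)$. Balancing at $p = \Theta(\log n)$ pins down $s = \Omega(\log n)$, and carrying the $T$ contribution through the moment hierarchy produces the quantitative $n^{1-\eta}/n^{-\eta}$ trade-off between parts (1) and (2). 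I expect this trace-method step to be the main technical obstacle: one must cleanly bound the walk count when the rows of $L$ have non-uniform sparsity, properly separate the cross $L$--$T$ contributions in the shifted moment, and then convert the resulting spectral inequality back into the row-count statement (1) via a thresholding argument, since a handful of very dense rows could saturate the Frobenius estimate (2) without raising the row count.
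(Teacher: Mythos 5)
There is a genuine gap here, and it is structural rather than a matter of unfinished details. Every quantity your argument manipulates --- the rank of $A$, the column norms $\norm{A_{\Bigast j}}_2=\Theta(K)$ (which is all the RIP you ever invoke, via singleton supports), and the row sparsity of $L$ --- is also exhibited by matrices of constant row sparsity that are emphatically not $(\Omega(n),O(1))$-$\ell_2$-RIP: for instance the block-diagonal matrix with $n/2$ blocks equal to $(1\ \ 1)$, or the edge--vertex incidence matrix of a constant-degree graph. These have rank $\Theta(n)$, equal column norms, and $s=O(1)$, so no computation based only on those inputs can force $s=\Omega(\log n)$. Concretely, in the block example $M-I$ has eigenvalues $\pm1$, so $\tr((M-cI)^{2p})$ equals $n$ for all $p$ while your rank-deficit lower bound is $n/2$; balancing at $p=\Theta(\log n)$ pins down nothing. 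Two further problems with the baseline step: the lower bound on the off-diagonal mass, $nK^4(1/\alpha - D^4)$, is vacuous unless $\alpha<D^{-4}$, and the shift $M-cI$ does not repair this (optimizing over $c$ just reproduces $(\tr M)^2/\rank M$); and ``invoking $m=O(n)$'' is an assumption absent from the theorem, which the paper explicitly notes holds even for $m\geq n$.

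The missing idea is to feed an \emph{approximately $k$-sparse} witness vector into the RIP hypothesis --- this is the only way the assumption about $\Omega(n)$-sparse vectors can enter. The paper first shows RIP implies an ``analytic RIP'' for vectors with $\norm{x}_1\le\sqrt{k'}\norm{x}_2$, then takes $x=e^{-tA^\top A}e_i$ for a coordinate $i$ with $\norm{\Pi e_i}_2\ge\sqrt{1-\alpha}$ ($\Pi$ the projection onto $\ker A$). One gets $\norm{x}_2\ge\sqrt{1-\alpha}$, $\norm{Ax}_2^2\le 1/(2te)$, and --- this is where sparsity enters --- $\norm{x}_1\le e^{t\norm{A^\top A}_{\ell_1\to\ell_1}}$; choosing $t\approx\log n/\norm{A^\top A}_{\ell_1\to\ell_1}$ makes $x$ compressible enough for the analytic RIP to apply, forcing $\norm{A^\top A}_{\ell_1\to\ell_1}\gtrsim\log n\cdot\norm{A}_F^2/n$. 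The $\log n$ comes from the exponential decay in $t$, not from a $\Theta(\log n)$-th trace moment. Converting this into your conclusions (1) and (2) additionally needs a column-removal step (to replace the $\ell_1\to\ell_1$ norm by $\sum_i\norm{A_{i\Bigast}}_1^2$) and a row-removal lemma showing that deleting $T$ together with a small set of columns preserves the analytic RIP --- precisely the ``thresholding argument'' you defer, which is itself a substantive piece of the proof.
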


 	\cref{ithm:ell2} is not the only sparsity lower bound for $\ell_2$-RIP matrices. A simple argument given in \cite{ZhuGR15} (based on an argument in \cite{Chandar10}) shows that a $(k,D)$-$\ell_2$-RIP matrix $A$ with column sparsity $t$ must either have $m > n/k$ or $t \geq k/D^2$. This implies that if $k$ is small, e.g.\ $O(\log n)$, then the matrix $A$ must either have superconstant column sparsity ($\Omega(\log n)$) or have many rows ($\Omega(n/\log n)$, far larger than the $\polylog(n)$ rows required for dense matrices).

  \cref{ithm:ell2} is incomparable to the argument of \cite{Chandar10, ZhuGR15}, as it applies in the different parameter regime of $k = \Theta(n)$, which is the regime of importance for Euclidean sections and testing of codes. Indeed, the argument of \cite{Chandar10, ZhuGR15} does not give \emph{any} bound on the sparsity in the regime of $k = \Theta(n)$, as it only implies that the number of rows $m$ is at least some constant. This limitation is inherent to the proof technique, and so the argument of \cite{Chandar10, ZhuGR15} does not extend to the $k = \Theta(n)$ regime. Moreover, \cref{ithm:ell2} still holds even when $m \geq n$ (assuming that $\rank(A) = \alpha n$ for some constant $\alpha \in (0,1)$), whereas the argument of \cite{Chandar10, ZhuGR15} does not give any bound on the sparsity when $m \geq n$ (even if $k = 1$).

\cref{ithm:ell2} rules out the existence of an $\ell_2$-RIP analog of $c^3$-LTCs in the sense discussed earlier. Indeed, the theorem implies that a parity check matrix $A$ with ``constant rate'' ($\dim \ker A \ge \Omega(n)$) and ``constant distance'' ($A$ is $\inparen{\Omega(n),O(1)}$-$\ell_2$-RIP) must have rows of Hamming weight $\Omega(\log n)$.

We now turn to our theorem about $\ell_p$-RIP matrices for $p < 2$. 
\begin{mtheorem}[Simplified \cref{thm:ellp}]\label{ithm:ellp}
		Fix a constant $p\in [1,2]$ and let $A \in \R^{m \times n}$ be a $(k, D)$-$\ell_p$-RIP matrix for some $\Omega(n) \le k \le n$ and $1\le D\le O(1)$. Let $A_{1\Bigast}, \dots, A_{m\Bigast }$ denote the rows of $A$. Then, 
		\begin{flalign*}
			\sum_{i = 1}^m  \norm{A_{i\Bigast }}_2^{p} = \Theta \inparen{\sum_{i = 1}^m \norm{A_{i\Bigast }}_p^p} \enspace.
		\end{flalign*}
	\end{mtheorem}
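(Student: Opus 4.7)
The plan is to establish the two sides of the $\Theta$-bound separately. The \emph{easy direction} is immediate: for $p \in [1,2]$ and any $v \in \R^n$ we have $\norm{v}_2 \leq \norm{v}_p$, so applying this to each row and summing yields $\sum_i \norm{A_{i\Bigast}}_2^p \leq \sum_i \norm{A_{i\Bigast}}_p^p$ for free.

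For the other direction, the plan is to use two different families of test vectors in the RIP condition and show that both $\sum_i \norm{A_{i\Bigast}}_p^p$ and $\sum_i \norm{A_{i\Bigast}}_2^p$ are $\Theta(K^p n)$, so they must match up to a constant. First, the standard basis vector $e_j$ is $1$-sparse with $\norm{e_j}_p = 1$, so the RIP upper bound $\norm{Ae_j}_p \leq DK$ gives $\sum_i |A_{ij}|^p \leq (DK)^p$; summing over $j \in [n]$ yields
\[
\sum_i \norm{A_{i\Bigast}}_p^p \;=\; \sum_{j=1}^n \norm{Ae_j}_p^p \;\leq\; (DK)^p \cdot n.
\]

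Second, let $x$ be a random vector supported on a uniformly chosen size-$k$ set $S \subseteq [n]$ with $x_j = \epsilon_j$ on $S$ for i.i.d.\ Rademacher signs $\epsilon_j \in \inset{\pm 1}$. Then $x$ is $k$-sparse with $\norm{x}_p^p = k$, so the RIP lower bound gives $\norm{Ax}_p^p \geq K^p k$ pointwise. On the other hand, conditional on $S$, $\langle A_{i\Bigast}, x\rangle = \sum_{j \in S} A_{ij} \epsilon_j$ is a Rademacher sum, and Jensen's inequality applied to the concave function $t \mapsto t^{p/2}$ (valid since $p \leq 2$) gives
\[
\E_\epsilon\bigl[|\langle A_{i\Bigast}, x\rangle|^p \bigm| S\bigr] \;\leq\; \bigl(\E_\epsilon[|\langle A_{i\Bigast}, x\rangle|^2 \mid S]\bigr)^{p/2} \;=\; \Bigl(\sum_{j\in S} A_{ij}^2\Bigr)^{p/2} \;\leq\; \norm{A_{i\Bigast}}_2^p.
\]
Summing over $i$ and taking expectation over $S$,
\[
K^p k \;\leq\; \E\bigl[\norm{Ax}_p^p\bigr] \;=\; \sum_i \E\bigl[|\langle A_{i\Bigast}, x\rangle|^p\bigr] \;\leq\; \sum_i \norm{A_{i\Bigast}}_2^p,
\]
so since $k = \Omega(n)$ one gets $\sum_i \norm{A_{i\Bigast}}_2^p = \Omega(K^p n)$. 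Combining the two estimates, $\sum_i \norm{A_{i\Bigast}}_p^p \leq (DK)^p n = O(K^p n) = O\bigl(\sum_i \norm{A_{i\Bigast}}_2^p\bigr)$, with constants depending only on $p$ and $D$.

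I do not foresee a real obstacle. The proof rests on the clever pairing of the two test distributions: the sparse $1$-atoms $e_j$ measure the $\ell_p$-mass of the rows, while the broadly-supported random $\pm 1$ vectors extract an $\ell_2$-mass bound via Jensen-style averaging over the signs. The sole subtlety is that the sparsity regime $k = \Omega(n)$ is exactly what makes the $O(K^p n)$ upper bound from the first test family match the $\Omega(K^p n)$ lower bound from the second; outside this regime the two sides would differ by a polynomial factor.
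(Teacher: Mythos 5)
Your proof is correct, and it reaches the same conclusion by a route that is close in spirit to the paper's but differs in the key probabilistic step. The paper also probes $A$ with the standard basis vectors (giving $\sum_i \norm{A_{i\Bigast}}_p^p \le (DK)^p n$) and with a random vector supported on a random $k$-subset $S$; but it takes the entries of that vector to be i.i.d.\ Gaussians, so that by $2$-stability $\E\bigl[\norm{Ax}_p^p\bigr] = f(p)\sum_i \norm{A_{i,S}}_2^p$ holds as an \emph{identity}, and it then needs an auxiliary claim (that some realization $y$ attains $\norm{Ay}_p^p/\norm{y}_p^p \le \E[\norm{Ax}_p^p]/\E[\norm{x}_p^p]$) because $\norm{x}_p$ is itself random. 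Your Rademacher choice sidesteps that claim entirely, since $\norm{x}_p^p = k$ deterministically, at the cost of replacing the Gaussian identity with the one-sided Jensen bound $\E_\epsilon[\abs{\langle A_{i\Bigast},x\rangle}^p] \le \bigl(\sum_{j\in S}A_{ij}^2\bigr)^{p/2}$ --- which points the right way precisely because $p\le 2$, whereas the paper's exact identity lets the same argument also handle $p>2$ in the full theorem. Two minor observations: the random choice of $S$ is superfluous in your argument (any fixed $k$-set works, since you bound $\sum_{j\in S}A_{ij}^2 \le \norm{A_{i\Bigast}}_2^2$ crudely), and this crude step is why you obtain the factor $D^p(n/k)$ rather than the paper's sharper $D^p(n/k)^{1-p/2}$, obtained by averaging $\norm{A_{i,S}}_2^p$ over $S$ via H\"older; for $k=\Omega(n)$ both are $O(1)$, so the simplified statement follows either way.
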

 We note that the ``$O$'' part of \cref{ithm:ellp} is trivial, as $\sum_{i = 1}^m  \norm{A_{i\Bigast }}_2^{p} \leq \sum_{i = 1}^m  \norm{A_{i\Bigast }}_p^{p}$ always holds since $\norm{x}_2 \leq \norm{x}_p$ for $p \in [1,2]$. Thus, the ``$\Omega$'' part is the nontrivial statement.

\cref{ithm:ellp} is an analytic statement about the norms of the rows of $A$.
To explain its implications for the sparsity of $A$, let us first consider the following simple case where every row of $A$ has exactly $s$ nonzero entries, each of magnitude $1$. Then, any row $A_{i\Bigast }$ satisfies $\norm{A_{i\Bigast }}_p = s^{1/p - 1/2} \norm{A_{i\Bigast }}_2$, which implies that $O(1) \sum_{i = 1}^m  \norm{A_{i\Bigast }}_2^{p} \geq \sum_{i = 1}^m \norm{A_{i\Bigast }}_p^p = s^{1 - p/2}\sum_{i = 1}^m  \norm{A_{i\Bigast }}_2^{p}$, where the first inequality is due to \cref{ithm:ellp}. So, \cref{ithm:ellp} implies that $s^{1 - p/2} \leq O(1)$, i.e., $s$ is constant for any constant $p < 2$.
 
 More generally, for $p \in [1,2)$ and any vector $x \in \R^n$, H\"{o}lder's inequality implies that $\norm{x}_2 \leq \norm{x}_p \leq n^{\frac{1}{p} - \frac{1}{2}}\norm{x}_2$, with the lower inequality achieved when $x = e_i$ is a standard basis vector, i.e., very sparse, and the upper inequality achieved when $x = 1^n$, i.e., very dense. Thus, the ratio $\frac{\norm{x}_p}{\norm{x}_2}$ can be viewed as a notion of analytic sparsity for the vector $x$. In particular, if $\norm{x}_p = \Theta(\norm{x}_2)$, then a constant fraction of the $\ell_p$-mass of $x$ must lie on a constant number of coordinates.\footnote{See \cref{sec:SparseCompressibleDistorted,prop:Compressible-Distorted} for a more detailed discussion and formal statement.}
 \cref{ithm:ellp} informally says that $\norm{A_{i\Bigast }}_p = \Theta(\norm{A_{i\Bigast }}_2)$ ``on average''. One can thus interpret \cref{ithm:ellp} to say that the rows of $A$ must have ``constant analytic sparsity on average'', where ``analytic sparsity'' is in the $\ell_p$ vs. $\ell_2$ sense stated above.

We note that because \cref{ithm:ell2} is a statement about the \emph{average} of the rows, rather than a worst case statement about all rows, we cannot rule out the existence of an $\Omega(\log n)$-sparse matrix $A$ that is simultaneously $(\Omega(n), O(1))$-$\ell_2$-RIP and $(\Omega(n), O(1))$-$\ell_p$-RIP for some $1 \leq p < 2$. Indeed, if \cref{ithm:ellp} instead implied that $\norm{A_{i \Bigast}}_p = \Theta(\norm{A_{i \Bigast}}_2)$ for \emph{all} rows $i \in [m]$, then the rows of the submatrix $T$ from \cref{ithm:ell2} would violate this conclusion, implying that $A$ cannot be both $\ell_2$ and $\ell_p$-RIP. However, because \cref{ithm:ellp} is only a statement that holds on average, our results do not prove this.

\subsection{Proof overview}\label{sec:strategy}
The proof of \cref{ithm:ell2} has two key ideas. First, we define a new property, the \emph{analytic restricted isometry property} (ARIP, \cref{def:ARIP}), that strengthens \cref{def:RIP} by requiring not only that $\norm{Ax}_2 \approx \norm{x}_2$ for every $k$-sparse $x$, but also for vectors $x$ that are \emph{close} to $k$-sparse. We then show that one can convert between ARIP and RIP with some small loss in parameters when $k = \Theta(n)$ (\cref{prop:RIP-ARIP}). Hence, we can execute the proof of \cref{ithm:ell2} assuming that $A$ is ARIP.

Now, given that we assume that $A$ is ARIP, it suffices to show that if $A$ is too sparse, then there is a vector $x$ that is close to a $k$-sparse vector and $\norm{Ax}_2 \ll \norm{x}_2$. Our second key idea (\cref{lem:MainNegativeARIPWithoutDistortion}) is to choose $x = e^{-t A^{\top} A} e_i$, where $e_i$ is a standard basis vector chosen so that $\norm{\Pi e_i}_2$ is large, where $\Pi$ is the orthogonal projection onto $\ker(A)$ and $t > 0$ is a parameter that we will pick carefully.  In the eigenbasis of $A^{\top}A$, the transformation ${e^{-t A^\top A}}$ significantly attenuates the magnitude of eigenvectors with large eigenvalues but preserves the eigenvectors with eigenvalue $0$, i.e., those in $\ker(A)$. Thus, as $t \to \infty$, we have $x \to \Pi e_i$, i.e., $x$ becomes the projection of $e_i$ onto $\ker(A)$.
So, as $t \to \infty$, we have $\norm{Ax}_2 \to 0$, but $\norm{x}_2$ will still be large because $e_i$ was chosen to have large projection onto $\ker(A)$. 

To violate the ARIP of $A$, we additionally need to have that $x$ is close to $k$-sparse, and if we take $t \to \infty$ it becomes hard to control this quantity. But, on the other extreme when $t = 0$, we have $x = e_i$, which is $1$-sparse. By carefully choosing $t$ to be an intermediate quantity, we can simultaneously ensure that $\norm{Ax}_2$ is small, $\norm{x}_2$ is large, and $x$ is close to sparse. We control the ``approximate sparsity'' of $x$ by bounding $\norm{x}_1$, and here we crucially use the sparsity of $A$.

The full proof of \cref{ithm:ell2} requires some additional steps, but \cref{lem:MainNegativeARIPWithoutDistortion} captures the core of the argument.

The proof of \cref{ithm:ellp} is simpler. We probe the matrix $A$ with a random standard coordinate vector $e_j$ and a random $k$-sparse vector $g$ with Gaussian entries. We show that $\norm{Ae_j}_p^p/\norm{e_j}_p^p = t$ behaves very differently in expectation compared to $\norm{Ag}_p^p/\norm{g}_p^p$, and this allows us to bound $\sum_{i = 1}^m \norm{A_{i \Bigast}}_{2}^p$ in terms of $\sum_{i = 1}^m \norm{A_{i \Bigast}}_{p}^p$.

     %%%%%%%%%%%%%%%%%%%%%%%%%%%%%%%%%%%%%%%%%%%%%%%%%%%%%%%%%%%%%%%%%%%%%%%%%%%%%%%%
    %%%%%%%%%%%%%%%%%%%%%%%%%%%%%%%%%%%%%%%%%%%%%%%%%%%%%%%%%%%%%%%%%%%%%%%%%%%%%%%%
	%%%%%%%%%%%%%%%%%%%%%%%%%%%%%%%%%%%%%%%%%%%%%%%%%%%%%%%%%%%%%%%%%%%%%%%%%%%%%%%%
 \section{Preliminaries}
 \label{sec:prelims}
     %%%%%%%%%%%%%%%%%%%%%%%%%%%%%%%%%%%%%%%%%%%%%%%%%%%%%%%%%%%%%%%%%%%%%%%%%%%%%%%%
	%%%%%%%%%%%%%%%%%%%%%%%%%%%%%%%%%%%%%%%%%%%%%%%%%%%%%%%%%%%%%%%%%%%%%%%%%%%%%%%%
 \subsection{Notation and conventions}
	The implicit factor in asymptotic notations is an absolute constant, unless stated otherwise.
	
    For a matrix $A\in \R^{m\times n}$, we denote the $i$-th row and the $j$-th column of $A$ by $A_{i\Bigast }$ and $A_{\Bigast j}$, respectively.
      
        Let $p \ge 1$. For a vector $x \in \R^n$, we let $\norm{x}_p$ denote the $\ell_p$-norm of $x$, i.e., $\norm{x}_p = (\sum_{i = 1}^n \abs{x_i}^p)^{1/p}$. For a matrix $A \in \R^{m \times n}$, we let $\norm{A}_{\ell_p \to \ell_p}$ denote the $\ell_p\to\ell_p$ operator norm of a matrix $A$, which is defined by
	\begin{equation*}
	\norm{A}_{\ell_p\to\ell_p} = \sup\inset {\norm {Ax}_p \mid x\in \R^n~\wedge~\norm x_p = 1}\eperiod
	\end{equation*}
	The \deffont{Frobenius norm} $\norm{A}_F$ is defined as $\sqrt{\sum_{i,j} \abs{A_{i,j}}^2}$. The Frobenius norm satisfies
 \begin{equation}\label{eq:FrobeniusTrace}
     \norm A_F^2 = \tr\inparen{AA^\top} = \tr\inparen{A^\top A}\eperiod
 \end{equation}

     %%%%%%%%%%%%%%%%%%%%%%%%%%%%%%%%%%%%%%%%%%%%%%%%%%%%%%%%%%%%%%%%%%%%%%%%%%%%%%%%
	%%%%%%%%%%%%%%%%%%%%%%%%%%%%%%%%%%%%%%%%%%%%%%%%%%%%%%%%%%%%%%%%%%%%%%%%%%%%%%%%
	\subsection{Sparse, Compressible and Distorted vectors}\label{sec:SparseCompressibleDistorted}
	\begin{definition}
		Let $x\in \R^n$ and let $1\le q\le p$. We define the \deffont{$\inparen{\ell_q,\ell_p}$-distortion} of $x$ as
		$$\Delta_{q,p}(x) = \frac{\norm x_p \cdot n^{1/q-1/p}}{\norm x_q}\eperiod$$
	\end{definition}
	H\"older's inequality yields
	\begin{equation}\label{eq:DistortionBounds}
		1 \le \inparen{\frac n{\inabs{\supp(x)}}}^{1/q-1/p}\le \Delta_{q,p}(x)\le n^{1/q-1/p}\eperiod
	\end{equation}	
	In particular, sparse vectors have large distortion.
 As we next discuss, a certain converse of this fact also holds.
	
	\begin{definition}
		Fix $p \ge 1$. Let $k\le n\in \N$ and $\eps > 0$. A vectors $x\in \R^n\setminus \inset{0}$ is \deffont{$(k,\eps)$-$\ell_p$-compressible} if there exists a $k$-sparse $y\in \R^n$ such that $\norm{x-y}_p\le \eps \norm{x}_p$. 
	\end{definition}
	\begin{remark} \label{rem:largestEntriesForCompression}
		Without loss of generality $y$ can be taken to be equal to $x$ in the $k$ entries of $x$ that have the largest absolute value, and $0$ everywhere else.
	\end{remark}
	
	\begin{proposition}[{\cite[Prop.\ 3.11]{GMM22}}]\label{prop:Compressible-Distorted}
		Fix $1\le q < p$, $k\le n\in \N$  and $x\in \R^n$. The following holds.
		\begin{enumerate}
			\item \label{item:comp->dist} Let $\eps > 0$. If $x$ is $(k,\eps)$-$\ell_p$-compressible then $\Delta_{q,p}(x) \ge \frac{1}{\inparen{\frac kn}^{1/q-1/p}+\eps}$.
			\item \label{item:dist->comp} The vector $x$ is $\inparen{k,\frac{\inparen{\frac nk}^{1/q}}{\Delta_{q,p}(x)}}$-$\ell_p$-compressible.
		\end{enumerate}
	\end{proposition}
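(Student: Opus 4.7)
The plan is to prove both parts via H\"older's inequality combined with the explicit $k$-sparse approximation described in \cref{rem:largestEntriesForCompression}: take $y$ to be the ``hard-thresholded'' vector obtained by keeping the $k$ coordinates of $x$ of largest magnitude and zeroing out the rest. By construction, $y$ and $x - y$ have disjoint supports of size at most $k$ and $n - k$ respectively, which is what makes the calculation go through.

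For Part \ref{item:comp->dist}, I first rescale so $\norm{x}_p = 1$; the hypothesis then produces some $k$-sparse vector within $\ell_p$-distance $\eps$, and by \cref{rem:largestEntriesForCompression} I may take this vector to be the top-$k$ truncation $y$, giving $\norm{y}_p \le 1$ and $\norm{x - y}_p \le \eps$. Since $\Delta_{q,p}(x) = n^{1/q - 1/p}/\norm{x}_q$, it suffices to upper-bound $\norm{x}_q$. The triangle inequality for $\norm{\cdot}_q$ (valid because $q \ge 1$) gives $\norm{x}_q \le \norm{y}_q + \norm{x - y}_q$, and applying H\"older to each term via their support bounds yields $\norm{y}_q \le k^{1/q - 1/p}\norm{y}_p$ and $\norm{x - y}_q \le n^{1/q - 1/p}\norm{x - y}_p$. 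Plugging in the $\ell_p$ bounds and factoring out $n^{1/q - 1/p}$ reproduces the claimed inequality $\Delta_{q,p}(x) \ge 1/\bigl((k/n)^{1/q - 1/p} + \eps\bigr)$.

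For Part \ref{item:dist->comp}, I take the same $y$ and let $M$ denote the magnitude of the $k$-th largest entry of $x$. Two elementary bounds drive the argument. First, every coordinate of $x - y$ has magnitude at most $M$ and is supported on at most $n - k \le n$ coordinates, so $\norm{x - y}_p \le (n - k)^{1/p} M$. Second, each of the top $k$ entries of $x$ has magnitude at least $M$, so $k M^q \le \norm{y}_q^q \le \norm{x}_q^q$ and hence $M \le \norm{x}_q/k^{1/q}$. Combining these and substituting $\norm{x}_q = \norm{x}_p \cdot n^{1/q - 1/p}/\Delta_{q,p}(x)$ from the definition of distortion, the exponents of $n$ collapse via $(n - k)^{1/p} \cdot n^{1/q - 1/p}/k^{1/q} \le (n/k)^{1/q}$, yielding the desired $\ell_p$-compressibility bound.

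Neither direction involves a deep idea; the only real pitfall is bookkeeping the exponents $1/p$, $1/q$, and $1/q - 1/p$ correctly throughout. The one substantive subtlety is in Part \ref{item:comp->dist}: the hypothesis guarantees \emph{some} good $k$-sparse approximation, not necessarily the top-$k$ truncation, so one must justify replacing it with the canonical $y$---which is exactly the content of \cref{rem:largestEntriesForCompression}, since hard-thresholding minimizes $\norm{x - y}_p$ over all $k$-sparse $y$.
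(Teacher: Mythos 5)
Your proof is correct: both parts check out (in Part~\ref{item:comp->dist} the bound $\norm{x}_q \le k^{1/q-1/p} + \eps\, n^{1/q-1/p}$ gives exactly the claimed inequality, and in Part~\ref{item:dist->comp} the estimate $(n-k)^{1/p} n^{1/q-1/p}/k^{1/q} \le (n/k)^{1/q}$ closes the argument). The paper itself does not prove this proposition but imports it from \cite{GMM22}; your argument via H\"older's inequality applied to the top-$k$ truncation is the standard one used there, so there is nothing further to flag.
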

	
	We can now state the full version of the main theorems and prove that they imply their specialized versions from \cref{sec:intro}.

	%%%%%%%%%%%%%%%%%%%%%%%%%%%%%%%%%%%%%%%%%%%%%%%%%%%%%%%%%%%%%%%%%%%%%%%%%%%%%%%%
	%%%%%%%%%%%%%%%%%%%%%%%%%%%%%%%%%%%%%%%%%%%%%%%%%%%%%%%%%%%%%%%%%%%%%%%%%%%%%%%%
	%%%%%%%%%%%%%%%%%%%%%%%%%%%%%%%%%%%%%%%%%%%%%%%%%%%%%%%%%%%%%%%%%%%%%%%%%%%%%%%%

	\section{$\ell_2$-RIP Matrices Cannot Be Sparse}\label{sec:ell2Proof}
	In this section, we prove \cref{thm:ell2}, stated below, which is the formal version of \cref{ithm:ell2}.
 \begin{mtheorem}\label{thm:ell2}
		Fix $m,n\in \N$. Let $A\in \R^{m\times n}$ be a matrix of rank $r = \alpha n$ for some $0 < \alpha < 1$. Suppose that $A$ is $(k,D)$-$\ell_2$-RIP for some $k\ge 1$ and $D\ge 1$. Fix $\eta > 0$ such that $\frac{C_1 D^8}{(1-\alpha)^4}\le n^\eta \le \frac{C_2k^3}{D^4n^2}$, where $C_1,C_2$ are universal positive constants. Then, there is a submatrix $T\in \R^{t\times n}$ of rows of $A$, for some $t \le m$, with the following properties.
            
		\begin{enumerate}
			\item \label[property]{property:TLargeEll2RIP} $\frac{\norm {T}_F^2}{\norm A_F^2} \ge n^{-\eta}$.
			\item \label[property]{property:TmanyRowsRIP} $t \ge \Omega\inparen{ \frac{k^3}{D^4n^{2+\eta}}}$. 
   			\item \label[property]{property:TDistortedRIP} Every row $x$ of $T$ satisfies 
                \begin{equation*}
   			    \Delta_{1,2}(x) \le O\inparen{\inparen{\frac{D^3 n}{\eta \sqrt{1-\alpha}\log n}}^{1/2}} \enspace.
   		\end{equation*}
            In particular, every row is $\sqrt{\frac{s}{n}}$-far from all $s$-sparse vectors, where $s = \Omega\inparen{\frac{\eta \sqrt{1-\alpha}\log n}{D^3}}$.
		\end{enumerate}
	\end{mtheorem}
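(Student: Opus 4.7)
My plan is to establish all three conclusions in the contrapositive: assuming the dense-row submatrix $T$ satisfies none of the stated properties would produce a vector $x$ that is close to $k$-sparse yet admits $\norm{Ax}_2 \ll \norm{x}_2$, contradicting the $\ell_2$-RIP hypothesis. Following \cref{sec:strategy}, the first step is to replace RIP with an analytic variant (ARIP), which demands $\norm{Ax}_2 \gtrsim \norm{x}_2$ not only for $k$-sparse $x$ but for every $x$ whose $(\ell_1,\ell_2)$-distortion $\Delta_{1,2}(x)$ is sufficiently large; by \cref{prop:Compressible-Distorted}, such $x$ are close to being $k$-sparse. The promised conversion \cref{prop:RIP-ARIP} upgrades $(k,D)$-$\ell_2$-RIP to ARIP at the cost of absolute constants, since $k = \Theta(n)$. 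The task is then reduced to producing an ARIP-violator.

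For the witness, I would use the heat-kernel construction $x_t = e^{-tA^\top A} e_i$ for a well-chosen standard basis vector $e_i$ and parameter $t>0$. Let $\Pi$ denote the orthogonal projection onto $\ker(A)$; since $\rank(A)=\alpha n$, we have $\tr(\Pi) = (1-\alpha)n$, so averaging over $i$ yields an index with $\norm{\Pi e_i}_2^2 \ge 1-\alpha$. In the eigenbasis of $A^\top A$ the map $e^{-tA^\top A}$ is diagonal, and each positive eigenvalue $\lambda$ contributes a factor $\lambda e^{-2t\lambda} \le 1/(2te)$ to $\norm{Ax_t}_2^2$, while the kernel component of $e_i$ is preserved; thus $\norm{Ax_t}_2^2 \le 1/(2te)$ and $\norm{x_t}_2^2 \ge 1-\alpha$, so $\norm{Ax_t}_2/\norm{x_t}_2$ can be driven down by taking $t$ large. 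To certify that $x_t$ is close to sparse via $\Delta_{1,2}(x_t)$, I would bound $\norm{x_t}_1$ by expanding $x_t = \sum_{j\ge 0}\frac{(-t)^j}{j!}(A^\top A)^j e_i$, giving $\norm{x_t}_1 \le \exp\!\inparen{t\cdot \norm{A^\top A}_{\ell_1\to\ell_1}}$. The key point is that $\norm{A^\top A}_{\ell_1\to\ell_1}$ is small when the rows of $A$ are sparse: bounding row sums via $\sum_j \inabs{(A^\top A)_{ij}} \le \sum_\ell \inabs{A_{\ell i}} \cdot \norm{A_{\ell\Bigast}}_1$, this quantity shrinks with the row-sparsity of $A$. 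A balanced choice of $t$ would then simultaneously force $\norm{Ax_t}_2/\norm{x_t}_2$ below the ARIP threshold and push $\Delta_{1,2}(x_t)$ above the ARIP input range, contradicting ARIP.

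To derive the three properties of $T$ from a single execution of this argument, I would treat each failure mode in turn. If \cref{property:TmanyRowsRIP} fails, then $T$ has very few rows, so I would pre-project the whole construction onto $\ker(T)$ (whose codimension equals the number of rows of $T$), ensuring $Tx_t = 0$ and leaving only the rows of $A\setminus T$ to contribute to $\norm{A^\top A}_{\ell_1\to\ell_1}$; the codimension loss is absorbed by adjusting the effective $\alpha$ by $o(1)$. Failures of \cref{property:TLargeEll2RIP} or \cref{property:TDistortedRIP} analogously allow a uniform $s$-sparsity bound on the rows of $A\setminus T$, with $s$ calibrated to $\Omega(\eta\sqrt{1-\alpha}\log n/D^3)$ to match the balancing of $t$. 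The main obstacle is precisely this bookkeeping: the clean negative lemma \cref{lem:MainNegativeARIPWithoutDistortion} wants a \emph{uniformly} sparse matrix, but the hypothesis only gives that $A$ is sparse \emph{outside} of $T$, so the heat-kernel construction must be carried out on the $\ker(T)$-restricted operator and the three quantitative conclusions must be reconciled with a single choice of $t\approx n^{\Theta(\eta)}$ and $s\approx \eta \log n / D^3$.
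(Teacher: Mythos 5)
Your outline correctly identifies the paper's core mechanism: the RIP $\to$ ARIP conversion, the heat-kernel witness $x_t = e^{-tA^\top A}e_i$ with $\norm{x_t}_2^2 \ge 1-\alpha$, $\norm{Ax_t}_2^2 \le 1/(2te)$, and $\norm{x_t}_1 \le e^{t\norm{A^\top A}_{\ell_1\to\ell_1}}$, and the balancing of $t$ against $\norm{A^\top A}_{\ell_1\to\ell_1}$. That is exactly \cref{lem:MainNegativeARIPWithoutDistortion}. The gap is in the assembly. Your plan for handling the dense rows $T$ --- ``pre-project the whole construction onto $\ker(T)$'' --- does not work: the projection $\Pi_{\ker T}$ is a dense matrix, so replacing $e_i$ by $\Pi_{\ker T}e_i$ (or conjugating $A^\top A$ by $\Pi_{\ker T}$) destroys precisely the $\ell_1$-norm control that the whole argument hinges on; $\norm{\Pi_{\ker T}e_i}_1$ can be as large as $\sqrt{n}$, and the $\ell_1\to\ell_1$ norm of the projected operator is no longer governed by the row sparsity of $A\setminus T$. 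The paper's tool is different (\cref{lem:RowRemovalLemma}): it deletes the rows of $T$ together with the \emph{columns on which $T$ is $\ell_2$-heavy}, and then exploits the fact that every ARIP test vector $x$ has small $\ell_1$-norm to bound $\norm{Tx}_2 \le \norm{x}_1\cdot\max_j\norm{c_j}_2$ directly, so that $\norm{A_{\overline I}x}_2 \approx \norm{Ax}_2$. This column-removal-plus-$\ell_1$-duality step is a genuinely separate idea that your proposal is missing, and without it the ``bookkeeping'' you flag as the main obstacle cannot be closed.

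A second, related omission: your bound $\sum_j\inabs{(A^\top A)_{ij}} \le \sum_\ell \inabs{A_{\ell i}}\norm{A_{\ell\Bigast}}_1$ controls the \emph{worst} column sum of $A^\top A$, but \cref{property:TDistortedRIP} is a statement about the distortion of individual rows of $A$, which requires the \emph{averaged} quantity $\sum_i\norm{A_{i\Bigast}}_1^2/\norm{A}_F^2$. The paper gets from one to the other by a second column-removal step (\cref{lem:MainNegativeARIP}): delete the few columns of $B=A^\top A$ with $\ell_1$-norm above twice the average, note ARIP survives, and rerun the main lemma on the trimmed matrix. Finally, the paper does not argue by contrapositive over ``failure modes'': it constructs one specific $T$ (sort rows by $\Delta_{1,2}$, take the minimal prefix with $\norm{T}_F^2 \ge n^{1-\eta}$), for which \cref{property:TLargeEll2RIP} holds by construction, \cref{property:TmanyRowsRIP} follows from the row-norm bound of \cref{lem:BasicBounds}, and \cref{property:TDistortedRIP} follows by applying \cref{lem:RowRemovalLemma} and then \cref{lem:MainNegativeARIP} to the complement $A\setminus T$, whose rows all have distortion at least that of $A_{(t+1)\Bigast}$. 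Your high-level strategy is the right one, but as written the proof would not go through at the step where $T$ is excised.
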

 
	\subsection{The Analytic Restricted Isometry Property and its equivalence to RIP}
    To prove \cref{thm:ell2}, we introduce the \deffont{Analytic Restricted Isometry Property (ARIP)} --- a natural strengthening of the RIP where we require that $\norm{Ax}_2 \approx \norm{x}_2$ for not just all $k$-sparse $x$, but all $x$ that is ``analytically $k$-sparse'', i.e., $\Delta_{1,2}(x) \geq (n/k)^{1/2}$.\footnote{Note that if $x$ is $k$-sparse, then $\norm{x}_1 \leq \sqrt{k} \norm{x}_2$, which implies that $\Delta_{1,2}(x) \geq (n/k)^{1/2}$.}
    Using \cref{prop:Compressible-Distorted}, it is straightforward to show that the two notions are related, and are essentially equivalent in a suitable parameter regime. We then prove \cref{thm:ell2} for matrices that are ARIP, and use the equivalence to complete the proof.
 
    We now define the ARIP below. For simplicity, we will focus here on the $\ell_2$-norm, although the definition makes sense for any $\ell_p$-norm as well.
	\begin{definition}
 \label{def:ARIP}
		A matrix $A\in \R^{m\times n}$ is \deffont{$(k,D)$-$\ell_2$-ARIP} if there exists $K > 0$ such that for every $x\in \R^n$ with $\Delta_{1,2}(x) \ge \inparen{\frac nk}^{1/2}$ it holds that 
		$$K\norm x_2 \le \norm{Ax}_2\le D\cdot K\norm{ x}_2\eperiod$$
	\end{definition}

        By \cref{eq:DistortionBounds}, ARIP immediately implies RIP. As we show next, a reverse implication also holds at a certain cost to the parameters.
	
	\begin{proposition}[Equivalence between RIP and ARIP]\label{prop:RIP-ARIP}
		Fix $k>0$ and $D\ge 1$, and let $A\in \R^{m\times n}$. The following holds.
		\begin{enumerate}
			\item If $A$ is $(k,D)$-$\ell_2$-ARIP then it is also $(k,D)$-$\ell_2$-RIP.
			\item Let $D' > D$ and suppose that $A$ is $(k,D)$-$\ell_2$-RIP. Then, $A$ is $(k',D')$-$\ell_2$-ARIP for $k' = \frac{(D'-D)^2k^3}{(D'D+D'+D)^2n^2}$. \\ In particular, if $k \ge \Omega(n)$ and $D \le O(1)$ then $A$ is $(\Omega(n),2D)$-$\ell_2$-ARIP.
		\end{enumerate}
	\end{proposition}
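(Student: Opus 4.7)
Part (1) is immediate: any $k$-sparse $x \in \R^n$ has $|\supp(x)| \leq k$, so the left inequality of \eqref{eq:DistortionBounds} forces $\Delta_{1,2}(x) \geq (n/k)^{1/2}$, and the ARIP guarantee specializes to RIP with the same constants.

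For part (2), the plan is a standard head-plus-tail decomposition followed by covering the tail with $k$-sparse pieces to which RIP applies. Given $x$ with $\Delta_{1,2}(x) \geq (n/k')^{1/2}$, I invoke \cref{prop:Compressible-Distorted}(\ref{item:dist->comp}) with $q=1, p=2$ together with \cref{rem:largestEntriesForCompression} to write $x = y + z$, where $y$ is the restriction of $x$ to its largest-magnitude $k$ coordinates and
\begin{equation*}
\norm{z}_2 \leq \eps\,\norm{x}_2 \qquad\text{with}\qquad \eps = \frac{n/k}{\Delta_{1,2}(x)} \leq \frac{\sqrt{nk'}}{k}\eperiod
\end{equation*}
I then partition $\supp(z)$ into at most $J \leq \lceil n/k \rceil$ pairwise disjoint blocks of size $\leq k$, giving a decomposition $z = z_1 + \cdots + z_J$ with each $z_j$ being $k$-sparse.

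Applying the RIP bounds (with constant $K$) to $y$ and to each $z_j$, the triangle inequality gives
\begin{equation*}
K\,\norm{y}_2 - DK \sum_{j=1}^J \norm{z_j}_2 \;\leq\; \norm{Ax}_2 \;\leq\; DK\,\norm{y}_2 + DK \sum_{j=1}^J \norm{z_j}_2\eperiod
\end{equation*}
Cauchy--Schwarz and disjointness of supports yield $\sum_j \norm{z_j}_2 \leq \sqrt{J}\,\norm{z}_2 \leq \sqrt{n/k}\cdot \eps\,\norm{x}_2 \leq \tau\,\norm{x}_2$ with $\tau \defeq n\sqrt{k'}/k^{3/2}$, while orthogonality of the head and tail gives $\norm{y}_2 \geq \sqrt{1-\eps^2}\,\norm{x}_2$. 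Assuming $\tau \leq 1$ (automatic from the claimed bound on $k'$), one has $\sqrt{1-\eps^2} \geq 1-\tau$, so the bounds on $\norm{Ax}_2/\norm{x}_2$ collapse to $DK(1+\tau)$ and $K(1-(1+D)\tau)$. The distortion is thus at most $D(1+\tau)/(1-(1+D)\tau)$; requiring this to be $\leq D'$ and solving for $\tau$ gives the clean condition $\tau \leq (D'-D)/(D'D+D'+D)$, which rearranges to the stated upper bound $k' \leq (D'-D)^2 k^3/((D'D+D'+D)^2 n^2)$. The ``in particular'' claim follows by plugging in $k = \Omega(n)$, $D = O(1)$, $D' = 2D$.

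The main delicacy is the choice of Cauchy--Schwarz over the sharper telescoping/batch bound familiar from compressed-sensing arguments (which would give $\sum_j \norm{z_j}_2 \leq \norm{x}_1/\sqrt{k}$): the coarser estimate $\sum_j \norm{z_j}_2 \leq \sqrt{n/k}\,\norm{z}_2$ injects precisely the $\sqrt{n/k}$ factor that produces the $k^3/n^2$ scaling in $k'$. A minor technicality is verifying $\tau \leq 1$ so that $\sqrt{1-\eps^2} \geq 1-\tau$, but this holds automatically under the hypothesized bound on $k'$ since the required $\tau$ is already a constant strictly less than $1$.
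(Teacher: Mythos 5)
Your proof is correct and follows essentially the same route as the paper's: the same head-plus-tail decomposition into $k$-sparse blocks via \cref{prop:Compressible-Distorted} and \cref{rem:largestEntriesForCompression}, the same Cauchy--Schwarz bound $\sum_j \norm{z_j}_2 \le \sqrt{n/k}\,\norm{z}_2$ injecting the $k^3/n^2$ scaling, and the same final algebra $\tau \le (D'-D)/(D'D+D'+D)$. The only cosmetic difference is that you lower-bound the head via orthogonality ($\norm{y}_2 \ge \sqrt{1-\eps^2}\,\norm{x}_2$) where the paper uses the triangle inequality ($\norm{y^1}_2 \ge 1 - \norm{y'}_2$); both yield the same conclusion.
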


	\begin{proof}
		The first claim follows immediately from \cref{eq:DistortionBounds}. We turn to proving the second claim.
		
		We assume without loss of generality that $A$ satisfies \cref{def:RIP} with $K=1$. Fix $k' > 0$. Let $x\in \R^n$ with $\norm x_2=1$ and suppose that $\Delta_{1,2}(x) \ge \inparen{\frac n{k'}}^{1/2}$. Our goal is to show that 
		\begin{equation}\label{eq:RIPtoARIPNeedToShow}
			1-(D+1)\cdot \frac {n\cdot k'^{1/2}}{k^{3/2}} \le \norm {Ax}_2 \le D\inparen{1+\frac {n\cdot k'^{1/2}}{k^{3/2}}}\eperiod
		\end{equation} 
		\cref{eq:RIPtoARIPNeedToShow} yields the claim, since, taking $k'$ as in the proposition statement, the ratio between the right-hand side and the left-hand side of \cref{eq:RIPtoARIPNeedToShow} becomes at most $D'$. This implies that $A$ is $(k',D')$-$\ell_2$-ARIP. We turn to proving \cref{eq:RIPtoARIPNeedToShow}.
		
		Suppose, without loss of generality, that the entries of $x$ are sorted in order of non-increasing absolute value. Write $x = \sum_{j=1}^{\ceil{n/k}} y^j$, where $y^j$ is the $k$-sparse vector defined by
		$$
		y^j_i =\begin{cases}
			x_i &\text{if }(j-1)k< i\le jk\\
			0 &\text{otherwise.}
		\end{cases}
		$$
		Denote $y' = x - y^1 = \sum_{j=2}^{\ceil {n/k}}y^j $. By \cref{prop:Compressible-Distorted}, $x$ is $\inparen{k,\frac{\sqrt{nk'}}{k}}$-$\ell_2$-compressible, so \cref{rem:largestEntriesForCompression} yields
		\begin{equation}\label{eq:y'NormBound}
			\norm {y'}_2 = \norm{x-y^1}_2 \le  \frac{\sqrt{nk'}}k\eperiod
		\end{equation}

		Now, by the triangle inequality,
		\begin{equation}\label{eq:RIPARIPAxBound}
			\norm{Ay^1}_2 - \norm{Ay'}_2 \le \norm{Ax}_2 \le \norm{Ay^1}_2 + \norm{Ay'}_2\eperiod
		\end{equation}
		By the RIP assumption,
		\begin{equation}\label{eq:RIPARIPAy1Bound}
			1-\norm{y'}_2=\norm {x}_2-\norm{y'}_2\le \norm{y^1}_2 \le \norm{Ay^1}_2 \le D\norm{y^1}_2\le D\norm x_2 = D\eperiod
		\end{equation}
		Also, by the RIP assumption and H\"older's inequality,
		\begin{equation}\label{eq:RIPARIPAy'Bound}
			\norm{Ay'}_2 \le \sum_{j=2}^{\ceil {n/k}}\norm{Ay^j}_2 \le D\sum_{j=2}^{\ceil {n/k}}\norm {y^j}_2 \le D\inparen{\frac nk}^{1/2} \norm {y'}_2\eperiod
		\end{equation}
		Together, \cref{eq:RIPARIPAxBound,eq:RIPARIPAy1Bound,eq:RIPARIPAy'Bound} yield
		$$1- \inparen{D+1}\inparen{\frac nk}^{1/2}\norm{y'}_2\le 1-\inparen{1+D\inparen{\frac nk}^{1/2}}\norm{y'}_2\le \norm{Ax}_2 \le D+D\inparen{\frac nk}^{1/2}\norm{y'}_2\eperiod$$
		\cref{eq:RIPtoARIPNeedToShow} follows from the above and \cref{eq:y'NormBound}.
	\end{proof}

        We shall now state \cref{thm:ell2ARIP} -- a version of \cref{thm:ell2} for ARIP matrices -- and immediately prove that the former implies the latter. The rest of this section will be devoted to proving \cref{thm:ell2ARIP}.
 	\begin{mtheorem}[\cref{thm:ell2} for ARIP matrices]\label{thm:ell2ARIP}
		Fix $m,n\in \N$. Let $A\in \R^{m\times n}$ be a matrix of rank $r = \alpha n$ for some $0 < \alpha < 1$. Suppose that $A$ is $(k,D)$-$\ell_2$-ARIP for some $k\ge 1$ and $D\ge 1$. Fix $\eta > 0$ such that $\frac{64D^8}{(1-\alpha)^4}\le n^\eta \le \frac{k}{D^2}$. Then, there is a submatrix $T\in \R^{t\times n}$ of rows of $A$, for some $t \le m$, with the following properties.
            
		\begin{enumerate}
			\item \label[property]{property:TLargeEll2ARIP} $\frac{\norm {T}_F^2}{\norm A_F^2} \ge n^{-\eta}$.
			\item \label[property]{property:TmanyRowsARIP} $t \ge \frac{k}{D^2n^\eta}$. 
   			\item \label[property]{property:TDistortedARIP} Every row $x$ of $T$ satisfies 
                \begin{equation}
                \label{eq-arip-distortion}
   			    \Delta_{1,2}(x) \le O\inparen{\inparen{\frac{D^3 n}{\eta \sqrt{1-\alpha}\log n}}^{1/2}} \enspace.
   		\end{equation}
            In particular, every row is $\sqrt{\frac{s}{n}}$-far from all $s$-sparse vectors, where $s = \Omega\inparen{\frac{\eta \sqrt{1-\alpha}\log n}{D^3}}$.
		\end{enumerate}
	\end{mtheorem}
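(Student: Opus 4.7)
My plan is to define $T$ as the submatrix of $A$ consisting of those rows whose $(\ell_1,\ell_2)$-distortion is at most $\tau := C\sqrt{D^3 n/(\eta\sqrt{1-\alpha}\log n)}$, for a suitable absolute constant $C>0$. Property 3 then follows directly from \cref{prop:Compressible-Distorted}, since every row of $T$ has distortion at most $\tau$. I will prove Properties 1 and 2 together by contradiction: assuming $\|T\|_F^2 < n^{-\eta}\|A\|_F^2$ (which also negates Property 2 via a row-$\ell_2$ bound coming from ARIP applied to 1-sparse vectors $e_j$), I will produce a witness vector $y\in\R^n$ satisfying $\Delta_{1,2}(y)\ge \sqrt{n/k}$ yet $\|Ay\|_2 < K\|y\|_2$, contradicting the ARIP of $A$. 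Under this hypothesis, the complementary submatrix $B := A\setminus T$ has rows obeying $\|B_{i\Bigast}\|_1 < (\sqrt{n}/\tau)\|B_{i\Bigast}\|_2$ and inherits essentially the full Frobenius mass of $A$, which is bounded by $\|A\|_F \le D\sqrt{n}$ (again via ARIP on 1-sparse vectors).

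The witness is $y = y_t := e^{-t A^\top A} e_i$, where $i\in[n]$ is chosen so that $\|\Pi e_i\|_2^2 \ge (1-\alpha)/2$ (which exists because $\tr(\Pi) = (1-\alpha) n$ for the orthogonal projection $\Pi$ onto $\ker A$), and $t>0$ is of order $D^2/(1-\alpha)$. Diagonalizing $A^\top A$ and using the elementary inequality $\lambda e^{-2t\lambda} \le 1/(2et)$ on $[0,\infty)$ yields both $\|Ay_t\|_2^2 \le 1/(2et)$ and $\|y_t\|_2^2 \ge \|\Pi e_i\|_2^2 \ge (1-\alpha)/2$, so that $\|Ay_t\|_2 / \|y_t\|_2$ is comfortably less than $K/D$ for the chosen value of $t$. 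The main step is to upper bound $\|y_t\|_1$. Using Duhamel's identity $y_t = e_i - \int_0^t A^\top A y_s \, ds$, the row decomposition $A^\top A y_s = \sum_i (A y_s)_i\, A_{i\Bigast}^\top$, and Cauchy--Schwarz across the split $A = T+B$, I will obtain
\begin{equation*}
\|y_t\|_1 \le 1 + \bigl((\sqrt{n}/\tau)\|B\|_F + \sqrt{n}\,\|T\|_F\bigr)\int_0^t \|Ay_s\|_2\,ds,
\end{equation*}
with the time integral at most $\sqrt{2t/e}$ by the same pointwise spectral bound. Substituting the bounds on $\|T\|_F, \|B\|_F$, the chosen $t$, and the value of $\tau$ should then deliver $\|y_t\|_1 \le \sqrt{k}\,\|y_t\|_2$, i.e.\ $\Delta_{1,2}(y_t) \ge \sqrt{n/k}$, producing the desired contradiction.

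I expect the main obstacle to be the parameter bookkeeping. The window of admissible $t$ is narrow: $t$ must be large enough to drive $\|Ay_t\|_2/\|y_t\|_2$ strictly below $K/D$, yet small enough to keep $\|y_t\|_1 \le \sqrt{k}\,\|y_t\|_2$. The hypothesis $64 D^8/(1-\alpha)^4 \le n^\eta \le k/D^2$ is precisely what guarantees this window is nonempty, and the logarithmic factor in $\tau$ emerges from the optimal balancing of these two competing requirements. In particular, controlling the $T$-contribution to $\|y_t\|_1$ using only the Frobenius bound supplied by the contradiction hypothesis -- without any additional structure on the rows of $T$ beyond what ARIP on 1-sparse vectors gives -- is the most delicate point in the estimate, and I would isolate it as an auxiliary ``negative'' lemma stating that a matrix whose dense rows carry too little Frobenius mass cannot satisfy the ARIP, and then apply that lemma to the present setting.
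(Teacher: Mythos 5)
Your overall architecture is the same as the paper's key technical lemma (\cref{lem:MainNegativeARIPWithoutDistortion}): the witness $y_t=e^{-tA^\top A}e_i$ with $i$ chosen so that $\norm{\Pi e_i}_2$ is large, the spectral bounds $\norm{Ay_t}_2^2\le \tfrac1{2et}$ and $\norm{y_t}_2^2\ge 1-\alpha$, and the plan to contradict ARIP by additionally forcing $\Delta_{1,2}(y_t)\ge\sqrt{n/k}$. Your threshold-based choice of $T$ is also a harmless variant of the paper's prefix-of-sorted-rows choice. The gap is in the $\ell_1$ estimate, and it is not bookkeeping: it loses exactly the $\log n$ factor that \cref{property:TDistortedARIP} is about. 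Normalize $\norm{A}_F=\sqrt n$ and write $W=\sum_i\norm{A_{i\Bigast}}_1^2$. Your Duhamel/Cauchy--Schwarz step gives $\norm{y_t}_1\le 1+\sqrt{W}\int_0^t\norm{Ay_s}_2\,ds\le 1+\sqrt{2tW/e}$, i.e.\ growth linear in $\sqrt t$. The contradiction requires both $t\gtrsim D^2/(1-\alpha)$ (so that $\norm{Ay_t}_2<\norm{y_t}_2/D$) and $\norm{y_t}_1\le\sqrt{k}\,\norm{y_t}_2\approx\sqrt{k(1-\alpha)}$, which caps $t\lesssim k(1-\alpha)/W$; these are compatible only if $W\lesssim k(1-\alpha)^2/D^2$, with no logarithm. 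Under your contradiction hypothesis the sparse rows dominate and give $W\approx n^2/\tau^2$, so you need $\tau\gtrsim D^2n/\inparen{\sqrt{k}(1-\alpha)}=\Omega(\sqrt n)$. With your actual $\tau=C\sqrt{D^3n/(\eta\sqrt{1-\alpha}\log n)}$ the dominant term $(\sqrt n/\tau)\norm{A\setminus T}_F\cdot\sqrt{2t/e}$ is $\Theta(\sqrt{n\log n})$ for constant $D,\alpha,\eta$, which exceeds $\sqrt{k}\norm{y_t}_2\le\sqrt n$, so no contradiction is reached. Your argument therefore only shows that the rows of $T$ have distortion $O(\sqrt n)$, i.e.\ are far from $O(1)$-sparse vectors rather than from $\Omega(\log n)$-sparse ones --- and the logarithmic density is precisely the content of the theorem (it is what rules out constant-locality testing matrices).

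The logarithm cannot be recovered from any bound on $\norm{y_t}_1$ that is linear in $\sqrt{tW}$. The paper instead uses the exponential bound $\norm{y_t}_1\le\norm{e^{tA^\top A}}_{\ell_1\to\ell_1}\le e^{t\norm{A^\top A}_{\ell_1\to\ell_1}}$, under which the constraint $\Delta_{1,2}(y_t)\ge\sqrt{n/k}$ permits $t$ as large as $\ln\inparen{\sqrt{k(1-\alpha)}}/\norm{A^\top A}_{\ell_1\to\ell_1}$; substituting this $t$ into $D\ge\norm{y_t}_2/\norm{Ay_t}_2\ge 2te\sqrt{1-\alpha}$ is exactly where the factor $\ln\inparen{k(1-\alpha)}$ enters. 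The price is that one must control $\norm{A^\top A}_{\ell_1\to\ell_1}$ (the maximum column absolute sum of $A^\top A$), a more delicate quantity than $W$: the paper handles it by deleting the few $\ell_1$-heavy columns of $A^\top A$ (\cref{lem:MainNegativeARIP}), and it deals with the rows of $T$ not through a Cauchy--Schwarz contribution but by excising them entirely while preserving ARIP via a row-removal lemma (\cref{lem:RowRemovalLemma}) that deletes a further small set of columns. If you want to keep your cleaner $T$-versus-complement split, you would still need to replace the one-step Duhamel estimate by an argument with this exponential structure.
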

        \begin{proof}[Proof of \cref{thm:ell2} given \cref{thm:ell2ARIP}]
            Let $A\in \R^{n\times m}$ be $(k,D)$-$\ell_2$-RIP. By \cref{prop:RIP-ARIP}, $A$ is $(k',D')$-$\ell_2$-ARIP for $D' = 2D$ and $k' = \frac{D^2k^3}{(2D^2+3D)^2n^2} \ge \Omega\inparen{\frac {k^3}{D^2n^2}}$. The conclusion of \cref{thm:ell2} follows by applying \cref{thm:ell2ARIP} to $A$.
        \end{proof}
	
	\subsection{Distortion bounds for $\ell_2$-ARIP matrices}	
    We next develop the necessary tools to prove \cref{thm:ell2ARIP}. The following lemma states several simple but useful facts about ARIP matrices.
    \begin{lemma}\label{lem:BasicBounds}
		Let $A\in \R^{m\times n}$ be $(k,D)$-$\ell_2$-ARIP. Assume that $\norm A_F = \sqrt n$. The following then holds.
		\begin{enumerate}
			\item \label{enum:UnitBound} There exist some $i_{\mathrm{less}},i_{\mathrm{more}}\in \{1,\dots,n\}$ such that $\norm{Ae_{i_{\mathrm{less}}}}_2 \le 1 \le \norm{Ae_{i_{\mathrm{more}}}}_2$.
			\item \label{enum:KernelBound} Let $\Pi \in \R^{n \times n}$ be the projection matrix for the orthogonal projection onto $\ker(A)$. Then, there exists some $i_{\ker}\in \{1,\dots,n\}$ such that $\norm{\Pi e_{i_{\ker}}}_2 \ge \sqrt{1-\frac{\rank(A)}n}$.
			\item \label{enum:OneSideARIP} For every $x\in \R^n$ with $\Delta_{1,2}(x) \ge \inparen{\frac nk}^{1/2}$, it holds that $\frac{\norm x_2}{D }\le\norm{Ax}_2 \le D\norm{x}_2$.
			\item \label{enum:RowBound} Each row of $A$ is of $\ell_2$-norm at most $D\inparen{\frac nk}^{1/2}$.
			\item \label{enum:ColumnBound} Each column of $A$ is of $\ell_2$-norm at most $D$.
		\end{enumerate}
	\end{lemma}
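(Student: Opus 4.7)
The plan is to verify the five items roughly in the order given, with the first two averaging facts and the normalization constant $K$ of \cref{def:ARIP} bootstrapping the last three. For \cref{enum:UnitBound} I would unpack the hypothesis $\norm{A}_F = \sqrt{n}$ as $\sum_{j=1}^n \norm{Ae_j}_2^2 = n$, so the mean of $\norm{Ae_j}_2^2$ over $j$ is exactly $1$, which forces at least one index at or below $1$ and one at or above. For \cref{enum:KernelBound} the same averaging is applied to the orthogonal projector, using $\norm{\Pi e_i}_2^2 = e_i^\top \Pi e_i$ and $\sum_i e_i^\top \Pi e_i = \tr(\Pi) = \dim\ker(A) = n - \rank(A)$, so some $i_{\ker}$ has squared projection at least $1 - \rank(A)/n$.

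The next task is to pin down the value of the ARIP constant $K$, which is the content of \cref{enum:OneSideARIP}. Since each standard basis vector is $1$-sparse, \cref{eq:DistortionBounds} yields $\Delta_{1,2}(e_i) = \sqrt{n} \ge \sqrt{n/k}$ whenever $k \ge 1$, so the ARIP inequality applies to every $e_i$. Substituting the indices $i_{\mathrm{less}}$ and $i_{\mathrm{more}}$ from \cref{enum:UnitBound} then forces $K \le \norm{Ae_{i_{\mathrm{less}}}}_2 \le 1$ on one side and $DK \ge \norm{Ae_{i_{\mathrm{more}}}}_2 \ge 1$ on the other, i.e., $K \in [1/D, 1]$, which is precisely the range needed to rewrite the ARIP inequality as the two-sided bound claimed. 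The column bound \cref{enum:ColumnBound} is then immediate: setting $x = e_j$ in \cref{enum:OneSideARIP} gives $\norm{A_{\Bigast j}}_2 = \norm{Ae_j}_2 \le D$.

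The only genuinely new work is \cref{enum:RowBound}, where one must bound an entire row of $A$ even though rows themselves need not satisfy the ARIP distortion hypothesis; this is the main obstacle. My plan is to probe the row piece-by-piece: for each $S \subseteq [n]$ of size at most $k$, form the test vector $x_S \in \R^n$ supported on $S$ whose restriction to $S$ is the unit vector in the direction of $A_{i\Bigast}|_S$. This $x_S$ is $k$-sparse with $\norm{x_S}_2 = 1$, so by \cref{eq:DistortionBounds} it is admissible for ARIP, and by construction the $i$-th coordinate of $Ax_S$ equals $\norm{A_{i\Bigast}|_S}_2$. Applying \cref{enum:OneSideARIP} therefore yields $\norm{A_{i\Bigast}|_S}_2 \le D$ for every such $S$. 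Partitioning $[n]$ into $\lceil n/k \rceil$ disjoint pieces of size at most $k$ and summing the squared sub-row norms gives $\norm{A_{i\Bigast}}_2^2 \le \lceil n/k \rceil \cdot D^2$, which is the claimed bound (up to a harmless rounding factor when $k \nmid n$). This partitioning trick is the only genuine idea in the whole lemma; everything else is bookkeeping around the ARIP definition and the Frobenius normalization.
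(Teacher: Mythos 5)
Your proposal matches the paper's proof for items \ref{enum:UnitBound}, \ref{enum:KernelBound}, \ref{enum:OneSideARIP}, and \ref{enum:ColumnBound} essentially verbatim: the same averaging of $\norm{Ae_j}_2^2$ and $\norm{\Pi e_j}_2^2$ over $j$, the same pinning of $K\in[1/D,1]$ by plugging $e_{i_{\mathrm{less}}}$ and $e_{i_{\mathrm{more}}}$ into the ARIP inequality, and the same substitution $x=e_j$ for the column bound.

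The one place you diverge is \cref{enum:RowBound}, and the divergence costs you a constant. The paper tests with a single $k$-sparse vector $b$ equal to $A_{i\Bigast}$ restricted to its $k$ largest-magnitude coordinates; because those are the largest, $\norm{b}_2^2 \ge \frac{k}{n}\norm{A_{i\Bigast}}_2^2$, and combining this with $(Ab)_i = \norm{b}_2^2$ and item \ref{enum:OneSideARIP} gives exactly $\norm{A_{i\Bigast}}_2 \le D\sqrt{n/k}$. Your version uses the same test-vector idea (the row restricted to a $k$-set, normalized) but applies it to all blocks of a partition of $[n]$, which yields $\norm{A_{i\Bigast}}_2 \le D\sqrt{\lceil n/k\rceil}$ --- weaker than the stated bound by up to a factor of $\sqrt{2}$ when $k \nmid n$. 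This is not a gap in the reasoning, but it does prove a slightly weaker inequality than the lemma claims, and the exact constant from item \ref{enum:RowBound} is invoked in the proof of \cref{thm:ell2ARIP} (in deriving $t \ge k/(D^2 n^{\eta})$ and $\norm{T}_F^2 \le 2n^{1-\eta}$); you would either need to propagate the extra factor of $2$ there or simply adopt the top-$k$ selection, which makes your argument a single application of item \ref{enum:OneSideARIP} rather than $\lceil n/k\rceil$ of them.
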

	\begin{proof}
			We begin with \cref{enum:UnitBound,enum:KernelBound}. Let $i'$ be uniformly sampled from $\{1,\dots, n\}$. Then, 
			$$\E{\norm{Ae_{i'}}_2^2} = \frac 1n \sum_{i=1}^n \norm{Ae_{i}}_2^2 = \frac 1n \sum_{i=1}^n\sum_{j=1}^m A_{i,j}^2 = \frac{\norm A_F^2}n = 1\eperiod$$
			which yields \cref{enum:UnitBound} since there must exist choices of $i'$ for which $\norm{Ae_i'}_2^2$ is at most (resp. at least) its expectation. \cref{enum:KernelBound} follows similarly from
			$$\E{\norm{\Pi e_{i'}}_2^2} = \frac{\norm \Pi_F^2}n = \frac{\dim( \ker(A))}n = 1 - \frac{\rank(A)}n\eperiod$$
			
			For \cref{enum:OneSideARIP} recall that according to the ARIP assumption, there is some $K > 0$ such that $$K\norm x_2 \le \norm{Ax}_2\le D\cdot K\norm{x}_2$$ for all $x\in \R^n$ with $\Delta_{1,2}(x)\ge \inparen{\frac nk}^{1/2}$. It thus suffices to show that $K\le 1\le DK$. We use \cref{enum:UnitBound}, substituting $e_{\mathrm{less}}$, for $x$. This yields 
			$$K = K\norm{e_{\mathrm{less}}}_2 \le \norm{Ae_{\mathrm{less}}}_2 \le 1\eperiod$$
			Similarly,
			$$KD = KD\norm{e_{\mathrm{more}}}_2 \ge \norm{Ae_{\mathrm{more}}}_2 \ge 1\ecomma$$
			proving the claim.
			
			We turn to \cref{enum:RowBound}. Suppose towards contradiction that $A$ has a row $A_{i\Bigast}$ with $\norm {A_{i\Bigast}}_2 > D\cdot \sqrt{\frac {n}k}$. Let $J$ be a set of $k$ coordinates such that the respective entries of $A_{i\Bigast}$ are the $k$ largest in absolute value. Define $b\in \R^n$ by 
			$$
			b_j = 
			\begin{cases}
				A_{i,j} &\text{if }j\in J\\
				0&\text{otherwise.}
			\end{cases}
			$$
			In particular, $b$ is $k$-sparse and so $\Delta_{1,2}(b)\ge \inparen{\frac nk}^{1/2}$ due to \cref{eq:DistortionBounds}. Hence, $$\frac{\norm{Ab}_2}{\norm{b}_2} \ge \frac{(Ab)_i}{\norm b_2} = \frac{\ip{b,b}}{\norm b_2} = \norm b_2 = \inparen{\sum_{j\in J}A_{i,j}^2}^{1/2} \ge \inparen{\frac kn\sum_{j=1}^n A_{i,j}^2}^{1/2} = \sqrt{\frac kn}\cdot \norm{A_{i\Bigast}}_2 > D \ , $$
			which contradicts \cref{enum:OneSideARIP}.
						
			Finally, for \cref{enum:ColumnBound}, observe that \cref{enum:OneSideARIP} yields $\norm {A_{\Bigast i}}_2 = \norm{Ae_i}_2 \le D$ for all $1\le i\le n$. \qedhere
	\end{proof}

    With \cref{lem:BasicBounds}, we are now ready to prove our key technical lemma, \cref{lem:MainNegativeARIPWithoutDistortion}. This lemma contains the technical core of our argument. 
        \begin{lemma}\label{lem:MainNegativeARIPWithoutDistortion}
		Fix $m,n\in \N$. Let $A\in \R^{m\times n}$ be a matrix of rank $r = \alpha n$ for some $0 < \alpha < 1$. Suppose that $A$ is $(k,D)$-$\ell_2$-ARIP for some $k \ge 1$ and $D\ge 1$. Then, 
		\begin{equation}\label{eq:MainNegativeExact}
                \frac{\norm{A^\top A}_{\ell_1\to \ell_1}}{\norm A_F^2} \ge \frac{e\sqrt{1-\alpha}\ln\inparen{ k(1-\alpha)}}{Dn}\eperiod
		\end{equation}
		In particular, for $D\le O(1)$, $1-\alpha \ge \Omega(1)$ and $k \ge \Omega(n)$ we have \begin{equation}\label{eq:MainNegativeApprox}
			\frac{\norm{A^\top A}_{\ell_1\to \ell_1}}{\norm A_F^2}\ge \Omega\inparen{\frac {\log n}{n}}\eperiod
		\end{equation}
	\end{lemma}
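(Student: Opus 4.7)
The plan is to argue by contradiction: assuming the claimed inequality fails, I will exhibit a single vector $x$ that is analytically $k$-sparse ($\Delta_{1,2}(x)\ge (n/k)^{1/2}$) yet satisfies $\norm{Ax}_2 < \norm{x}_2/D$, contradicting the one-sided ARIP bound of \cref{lem:BasicBounds}. Rescale $A$ so that $\norm{A}_F^2=n$, and let $i_{\ker}$ be the index furnished by \cref{lem:BasicBounds}, so that the orthogonal projection $\Pi$ onto $\ker(A)$ satisfies $\norm{\Pi e_{i_{\ker}}}_2\ge\sqrt{1-\alpha}$. For a parameter $t>0$ to be tuned, the test vector will be $x_t \defeq e^{-tA^\top A}e_{i_{\ker}}$.

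The key spectral bounds come from diagonalizing $A^\top A$ via the SVD $A=U\Sigma V^\top$: in the basis $V$, the operator $e^{-tA^\top A}$ acts on the $j$-th right-singular vector by the scalar $e^{-t\sigma_j^2}$, preserving the $\ker(A)$-component of $e_{i_{\ker}}$ and attenuating all other directions. This yields
\[
  \norm{x_t}_2^2 \;\ge\; \norm{\Pi e_{i_{\ker}}}_2^2 \;\ge\; 1-\alpha
  \qquad\text{and}\qquad
  \norm{Ax_t}_2^2 \;\le\; \max_{\sigma\ge 0}\bigl(\sigma^2 e^{-2t\sigma^2}\bigr)\cdot\norm{e_{i_{\ker}}}_2^2 \;=\; \tfrac{1}{2et}\mper
\]
For the $\ell_1$-norm bound---the step in which sparsity of $A$ enters, through the operator norm $L\defeq \norm{A^\top A}_{\ell_1\to\ell_1}$---expand the matrix exponential as a power series and apply submultiplicativity of $\norm{\cdot}_{\ell_1\to\ell_1}$ together with $\norm{e_{i_{\ker}}}_1=1$:
\[
  \norm{x_t}_1 \;\le\; \sum_{j\ge 0}\tfrac{t^j}{j!}\,L^{j} \;=\; e^{tL}\mper
\]

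To finish, I would choose $t$ as large as possible while keeping the distortion condition in force. Setting $t^\star \defeq \tfrac{\ln(k(1-\alpha))}{2L}$ makes $e^{t^\star L} = \sqrt{k(1-\alpha)} \le \sqrt k\,\norm{x_{t^\star}}_2$, so $\Delta_{1,2}(x_{t^\star})\ge (n/k)^{1/2}$, and the one-sided ARIP of \cref{lem:BasicBounds} forces $\norm{Ax_{t^\star}}_2 \ge \norm{x_{t^\star}}_2/D \ge \sqrt{1-\alpha}/D$. Combined with the spectral upper bound $\norm{Ax_{t^\star}}_2 \le 1/\sqrt{2et^\star}$, this gives $t^\star \le \tfrac{D^2}{2e(1-\alpha)}$, and substituting the definition of $t^\star$ rearranges to a lower bound on $L/n = L/\norm{A}_F^2$ of the form stated. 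The main obstacle lies in the three-way balance between $\norm{x_t}_2$, $\norm{Ax_t}_2$, and $\norm{x_t}_1$: each is governed by a different moment of the spectrum of $A^\top A$ at the scale $t$, and recovering the precise constant advertised (in particular, the $\sqrt{1-\alpha}/D$ scaling rather than the weaker $(1-\alpha)/D^2$ that drops out of the naive pairing above) will likely require a slightly more delicate choice of $t^\star$ than the one sketched. The power-series $\ell_1\to\ell_1$ bound is the one place where sparsity of $A$ is used, and tightening the matrix-exponential estimate is where any improvement in constants must come from.
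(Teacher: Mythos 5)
Your proposal is correct and is essentially identical to the paper's own proof: the same test vector $x_t=e^{-tA^\top A}e_{i_{\ker}}$, the same three bounds $\norm{x_t}_2^2\ge 1-\alpha$, $\norm{Ax_t}_2^2\le \frac{1}{2et}$, $\norm{x_t}_1\le e^{tL}$, and the same choice $t^\star=\frac{\ln(k(1-\alpha))}{2L}$. The constant discrepancy you flag at the end is not a gap in your argument: the paper's Property~3 is stated as $\norm{Ax}_2\le\frac{1}{2te}$ but its proof only establishes $\norm{Ax}_2^2\le\frac{1}{2te}$, and it is this slip that produces the advertised $\frac{\sqrt{1-\alpha}}{D}$ scaling; your $\frac{1-\alpha}{D^2}$ is what the argument actually delivers, and it suffices for the asymptotic conclusion \eqref{eq:MainNegativeApprox} and for every downstream use of the lemma.
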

    \begin{remark}\label{rem:MainLemmaForBoundedMatrix}
        To understand the statement of the lemma, it is helpful to consider the special case where $A$ is an $(\Omega(n), O(1))$-$\ell_2$-ARIP matrix with entries of magnitude $1$, $m = \Theta(n)$, and exactly $s$ (resp.\ $t$) nonzero entries per row (resp.\ column). For such a matrix $A$, each entry of $A^\top A$ is at most $st$, and $\norm{A}_F^2 = t n$. \cref{eq:MainNegativeApprox} then implies that $s \ge \Omega (\log n)$. In particular, $A$ cannot have constant row sparsity.
    \end{remark}
 \begin{proof}
        Let $B$ denote the positive semi-definite matrix $A^\top A\in \R^{n\times n}$. Write $\lambda_1,\dotsc, \lambda_n$ for the eigenvalues of $B$, and $v_1,\dotsc, v_n$ for the respective eigenvectors. Note that $B$ has exactly $r$ nonzero eigenvalues, so we may assume that $\lambda_1,\dots, \lambda_r$ are positive, while $\lambda_{r+1}=\dots=\lambda_n=0$. We further assume, without loss of generality, that $\norm A_F = \sqrt n$, and consequently, $\tr(B) = \sum_{i=1}^n \lambda_i = \norm A_F^2 = n$. 

        \cref{eq:MainNegativeApprox} clearly follows from \cref{eq:MainNegativeExact}, so it suffices to prove the latter. Our proof proceeds as outlined in \cref{sec:strategy}.

        Let $\Pi:\R^n\to \R^n$ denote the orthogonal projection map onto $\ker(A)$. By \cref{lem:BasicBounds}(\ref{enum:KernelBound}), there exists some $1\le i_{{\ker}} \le n$ such that $\norm{\Pi e_{i_{\ker}}}_2 \ge \sqrt{1-\alpha}$. We take $x = e^{-tB}e_{i_{\ker}}$ for some $t > 0$ to be determined later. Recall that $e^{-t B}$ is defined to be the matrix $\sum_{j=0}^\infty \frac{(-tB)^j}{j!}$. We claim that $x$ satisfies the following properties.
		\begin{enumerate}
			\item $\norm x_2 \ge \sqrt{1-\alpha}$ \label[property]{property:x_2Large}
			\item $\norm x_1 \le e^{t\norm B_{\ell_1\to \ell_1}}$ \label[property]{property:x_1Small}
			\item $\norm{Ax}_2 \le \frac 1{2te}$ \label[property]{property:Ax_2Small}
		\end{enumerate}
		Before proving these properties, we show that they yield \cref{eq:MainNegativeExact} and, consequently, the lemma. Take $t = \frac{\ln\inparen{\sqrt{k\inparen{1-\alpha}}}}{\norm B_{\ell_1\to \ell_1}}$. Then, $$\Delta_{1,2}(x) = \frac{\sqrt n\norm x_2}{\norm x_1} \ge \frac{\sqrt {n(1-\alpha)}}{e^{t\norm B_{\ell_1\to \ell_1}}} = \sqrt {\frac nk}\eperiod$$ 
		Therefore, by \cref{lem:BasicBounds}(\ref{enum:OneSideARIP}), 
		\begin{equation*}
			D \ge \frac{\norm {x}_2}{\norm {Ax}_2} \ge 2te\sqrt{1-\alpha} = \frac{e\sqrt{1-\alpha}\ln \inparen{k(1-\alpha)}}{\norm B_{\ell_1\to\ell_1}} =  \frac{e\sqrt{1-\alpha}\ln \inparen{k(1-\alpha)}}{\norm {A^\top A}_{\ell_1\to\ell_1}}\eperiod
		\end{equation*}
            \cref{eq:MainNegativeExact} follows due to the assumption that $\norm{A}_F = \sqrt n$.
		
		We turn to proving \cref{property:x_2Large,property:x_1Small,property:Ax_2Small}. For $1\le j\le n$, denote $a_j = \ip{e_{i_{\ker}}, v_j}$. Observe that $x = \sum_{j=1}^n e^{-t \lambda_j}a_j v_j$. Hence,
		\begin{align*}
			\norm{x}_2^2 &= \sum_{j=1}^n e^{-2 t \lambda_j}a_j^2 \ge \sum_{j=r+1}^n e^{-2 t \lambda_j}a_j^2 = \sum_{j=r+1}^n a_j^2 = \norm{\Pi e_{i_{\ker}}}_2^2 \ge  1-\alpha\ecomma 
		\end{align*}
		proving \cref{property:x_2Large}.
		
		For \cref{property:x_1Small}, we have
		$$\norm x_1 = \norm {e^{tB}e_{i_{\ker}}}_{1} \le  \norm {e^{tB}}_{\ell_1\to \ell_1} \norm{e_{i_{\ker}}}_{1} = \norm{e^{tB}}_{\ell_1\to \ell_1} \le \sum_{j=0}^\infty \frac{t^j\norm{B}_{\ell_1\to \ell_1}^j}{j!} = e^{t\norm{B}_{\ell_1\to \ell_1}}\eperiod$$
		
		Finally, for \cref{property:Ax_2Small} we use the inequality
		\begin{equation} \label{eq:lambdaEToTheMinusLambdaInequality}
			\lambda \cdot e^{-2t\lambda}\le \frac{1}{2te}
		\end{equation}
		for all $\lambda \ge 0$, which is readily verified via derivation of the left-hand side by $\lambda$. 
		\cref{eq:lambdaEToTheMinusLambdaInequality} yields \cref{property:Ax_2Small} since
		$$\norm{Ax}_2^2 = x^\top Bx = \sum_{j=1}^n \lambda_j a_j^2 e^{-2t\lambda_j} \le \frac{1}{2te}\sum_{j=1}^n a_j^2 = \frac{1}{2te}\eperiod $$ 
  This finishes the proof.
	\end{proof}

    Our next lemma, \cref{lem:MainNegativeARIP} below, strengthens \cref{lem:MainNegativeARIPWithoutDistortion} by replacing $\norm{B}_{\ell_1 \to \ell_1}$ with a sharper quantity. Concretely, it could be the case that $\norm{B}_{\ell_1 \to \ell_1}$ is large because a small number of columns of $B$ have large $\ell_1$-norm. By removing these columns from $A$, we can obtain a submatrix $A'$ of $A$ that is still ARIP but has $\norm{B'}_{\ell_1 \to \ell_1}$ smaller than$\norm{B}_{\ell_1 \to \ell_1}$, and this idea yields \cref{lem:MainNegativeARIP}.

\begin{lemma}\label{lem:MainNegativeARIP}
		Fix $m,n,\N$ and $s > 0$. Let $A\in \R^{m\times n}$ be a matrix of rank $\alpha n$ for some $0 < \alpha < 1$. %such that 
	    Suppose that $A$ is $(k,D)$-$\ell_2$-ARIP for some $k\ge 1$ and $D\ge \frac{1}{\sqrt{1-\alpha}}$ such that $k(1-\alpha) \ge 2$. Then, 
		\begin{equation}\label{eq:MainNegativeARIPNeedTOShow}
			\frac{\sum_{i=1}^m \norm{A_{i\Bigast }}_1^2}{\norm A_F^2}\ge \Omega\inparen{\frac{\sqrt{1-\alpha}\log\inparen{{k(1-\alpha)}}}{D^3}}\eperiod
		\end{equation}
		In particular, if $D\le O(1)$, $1-\alpha\ge \Omega(1)$ and $k \ge \Omega(n)$ then it must hold that $$\frac{\sum_{i=1}^m \norm{A_{i\Bigast }}_1^2}{\norm A_F^2}\ge \Omega(\log n)\eperiod$$
	\end{lemma}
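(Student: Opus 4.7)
The plan is to apply \cref{lem:MainNegativeARIPWithoutDistortion} iteratively to a shrinking sequence of submatrices of $A$ obtained by repeatedly deleting a ``heavy'' column, sum the per-iteration contributions, and convert the total into a bound on $\sum_{i=1}^m \norm{A_{i\Bigast}}_1^2$ via the entry-wise $\ell_1$-norm of $B \defeq A^\top A$. The key idea is that while \cref{lem:MainNegativeARIPWithoutDistortion} by itself certifies only a single column of $B$ with large $\ell_1$-norm, removing that column yields a new ARIP matrix, so the lemma can be re-applied to produce a \emph{fresh} heavy column each time.

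After normalizing $\norm{A}_F^2 = n$, I would set $N = \lfloor n/(2D^2)\rfloor$ and inductively define $A^{(t+1)}$ as $A^{(t)}$ with the column $j_t \defeq \arg\max_{b}\norm{B^{(t)}_{\Bigast b}}_1$ removed, where $B^{(t)} \defeq (A^{(t)})^\top A^{(t)}$. Three properties need to be verified for every $t \in \{0,\dots,N-1\}$: (a) $A^{(t)}$ remains $(k,D)$-ARIP, since zero-padding an analytically-sparse vector to a higher ambient dimension preserves its norms and can only increase its $(\ell_1,\ell_2)$-distortion; (b) $\norm{A^{(t)}}_F^2 \ge n/2$, because \cref{lem:BasicBounds}(\ref{enum:ColumnBound}) bounds each removed column's $\ell_2$-norm by $D$, so the cumulative Frobenius loss is at most $ND^2 \le n/2$; and (c) the rank-to-dimension ratio $\alpha^{(t)} \defeq \rank(A^{(t)})/(n-t)$ satisfies $1-\alpha^{(t)} \ge (1-\alpha)/2$, which follows from $N \le (1-\alpha)n/2$ (guaranteed by the hypothesis $D\ge 1/\sqrt{1-\alpha}$). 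Plugging these into \cref{lem:MainNegativeARIPWithoutDistortion} applied to $A^{(t)}$ then gives, for every $t < N$:
\begin{equation*}
\norm{B^{(t)}_{\Bigast j_t}}_1 \,=\, \norm{B^{(t)}}_{\ell_1\to\ell_1} \,\ge\, \frac{e\sqrt{1-\alpha^{(t)}}\ln(k(1-\alpha^{(t)}))}{D(n-t)}\,\norm{A^{(t)}}_F^2 \,=\, \Omega\inparen{\frac{\sqrt{1-\alpha}\log(k(1-\alpha))}{D}},
\end{equation*}
where the hypothesis $k(1-\alpha)\ge 2$ ensures $\log(k(1-\alpha^{(t)})) = \Theta(\log(k(1-\alpha)))$.

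To finish, I would sum this per-iteration bound over $t = 0,\ldots,N-1$ to obtain $\sum_t \norm{B^{(t)}_{\Bigast j_t}}_1 \ge \Omega\Paren{n\sqrt{1-\alpha}\log(k(1-\alpha))/D^3}$, and then chain three inequalities on the left-hand side: first, $\norm{B^{(t)}_{\Bigast j_t}}_1 \le \norm{B_{\Bigast j_t}}_1$ since each $B^{(t)}$ is a principal submatrix of $B$; second, $\sum_t \norm{B_{\Bigast j_t}}_1 \le \sum_{b\in[n]}\norm{B_{\Bigast b}}_1 = \sum_{a,b\in[n]}|B_{a,b}|$ because the columns $j_0,\dots,j_{N-1}$ are pairwise distinct; and third, the triangle inequality $|B_{a,b}| = |\sum_i A_{i,a}A_{i,b}| \le \sum_i |A_{i,a}||A_{i,b}|$ gives $\sum_{a,b}|B_{a,b}| \le \sum_i \Paren{\sum_b |A_{i,b}|}^2 = \sum_i \norm{A_{i\Bigast}}_1^2$. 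Dividing the resulting inequality by $\norm{A}_F^2 = n$ yields the claim.

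The main obstacle lies in balancing two competing constraints on the number of iterations $N$: maintaining $\alpha^{(t)} < 1$ requires $N < (1-\alpha)n$, while maintaining $\norm{A^{(t)}}_F^2 = \Omega(n)$ requires $N \lesssim n/D^2$. The hypothesis $D \ge 1/\sqrt{1-\alpha}$ is precisely what makes these two constraints compatible at the sweet spot $N = \Theta(n/D^2)$, and the product of $N$ with the per-step contribution $\Omega(\log(k(1-\alpha))/D)$ is what produces the $D^{-3}$ scaling advertised in the lemma.
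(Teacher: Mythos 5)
Your proof is correct, and it reaches the paper's bound by a genuinely different deduction built on the same key lemma. The paper also prunes heavy columns of $B = A^\top A$ and invokes \cref{lem:MainNegativeARIPWithoutDistortion} on the pruned matrix, but it does so in one shot: it sets a threshold $H = 2WD^2/n$ defined self-referentially in terms of the unknown quantity $W = \sum_i \norm{A_{i\Bigast}}_1^2$, deletes every column of $B$ with $\ell_1$-norm at least $H$ (at most $W/H = n/(2D^2)$ of them, by a Markov-type count using $\sum_j\norm{B_{\Bigast j}}_1 \le W$), and then plays the resulting upper bound $\norm{B'}_{\ell_1\to\ell_1} \le H$ against the lower bound of \cref{lem:MainNegativeARIPWithoutDistortion} to solve for $W$. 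Your greedy iteration replaces this self-referential threshold with an explicit supply of $N = \lfloor n/(2D^2)\rfloor$ certified-heavy, pairwise-distinct columns whose masses you sum; what each approach buys is a trade of the paper's shorter, single application of the lemma against your more transparent accounting that never needs to name the unknown $W$ in advance. The supporting bookkeeping (ARIP preserved under column removal via zero-padding, cumulative Frobenius loss at most $ND^2 \le n/2$ via \cref{lem:BasicBounds}(\ref{enum:ColumnBound}), rank ratio controlled by $N \le (1-\alpha)n/2$, and the final passage $\sum_{a,b}\abs{B_{a,b}} \le \sum_i\norm{A_{i\Bigast}}_1^2$) coincides with the paper's and you verify it correctly. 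Two small points to add for completeness: you need $N\ge 1$, so the regime $n < 2D^2$ should be dispatched separately (there the claim is trivial, since the left-hand side of \cref{eq:MainNegativeARIPNeedTOShow} is always at least $1$ while the right-hand side is $O(\log n/ n^{3/2})$); and, exactly as in the paper's own write-up, the per-step quantity $\ln(k(1-\alpha^{(t)})) \ge \ln(k(1-\alpha))-\ln 2$ degenerates when $k(1-\alpha)$ is close to its permitted minimum of $2$, which is harmless under the $\Omega(\cdot)$ but deserves a sentence.
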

 
	\begin{proof}		
	Assume without loss of generality that $\norm A_F = \sqrt n$. Write $B = A^\top A\in \R^{n\times n}$. Write $W = \sum_{i=1}^m \norm{A_{i\Bigast }}_1^2$  and $H = \frac{2WD^2}{n}$. Let $J = \{j\in\{1,\dots,n\}\mid \norm{B_{\Bigast j}}_1 \ge H\}$ indicate the $\ell_1$-heavy columns of $B$. Let $A'\in \R^{m\times (n-|J|)}$ be the matrix $A$ without the columns indicated by $J$. Note that the ARIP property is preserved by column-removal operations, so $A'$ is $(k,D)$-$\ell_2$-ARIP as well. Write $\alpha' = \frac{\rank(A')}{n-|J|}$.  \cref{lem:MainNegativeARIPWithoutDistortion} yields \begin{equation}\label{eq:MainNegativeARIPInitial}
	    \frac{\norm{A'^\top A'}_{\ell_1\to\ell_1}}{\norm {A'}_F^2} \ge \frac{e\sqrt{1-\alpha'}\ln\inparen{k(1-\alpha')}}{D\inparen{n-|J|}} \ge \frac{e\sqrt{1-\alpha'}\ln\inparen{k(1-\alpha')}}{Dn}\eperiod
    \end{equation}

        To deduce the lemma, we shall prove that \cref{eq:MainNegativeARIPInitial} implies \cref{eq:MainNegativeARIPNeedTOShow}. To do so, we bound some of the terms involved in \cref{eq:MainNegativeARIPInitial}. 
        
        Let $B' = A'^\top A'$ and note that $B'$ is the result of removing from $B$ the rows and columns indicated by $J$. Note that $$\norm{A'^\top A'}_{\ell_1\to \ell_1}  = \norm{B'}_{\ell_1\to \ell_1} = \max_{1\le j\le n-|J|} \norm{B'_{\Bigast j}}_1 \le \max_{j\in \inset{1,\dots, n}\setminus J}\norm{B_{\Bigast j}}_1 \le \frac{2WD^2}n\eperiod$$ We next bound $|J|$. Observe that
		\begin{align*}
			|J|&\le \frac{\sum_{j=1}^n\norm{B_{\Bigast j}}_1}{H} = \frac{\sum_{j=1}^n\sum_{i=1}^n |B_{i,j}|}H = \frac{\sum_{j=1}^n \sum_{j'=1}^n \abs{\ip{A_{\Bigast j},A_{\Bigast j'}}}}H \le \frac{\sum_{j=1}^n \sum_{j'=1}^n \sum_{i=1}^m \abs{A_{i,j}}\abs{A_{i,j'}}}H = \frac{W}H = \frac{n}{2D^2}\eperiod
		\end{align*}
		Also, by \cref{lem:BasicBounds}(\ref{enum:ColumnBound}),
		\begin{equation*}
			B_{i,i} = \norm{A_{\Bigast i}}_2^2  \le D^2
		\end{equation*}
		for all $1\le i\le n$. Hence, $$\norm{A'}_F^2 = \tr(B') \ge \tr(B) - |J|D^2 = n - |J|D^2 \ge \frac n2\eperiod$$ 
		Next, observe that $$1-\alpha' = 1-\frac{\rank B'}{n-|J|} \ge 1-\frac{\alpha n}{n-|J|} \ge 1-\frac{\alpha n}{n-\frac{n}{2D^2}} = 1- \frac{\alpha}{1-\frac 1{2D^2}} \ge 1-\frac{\alpha}{1-\frac{1-\alpha}{2}} \ge \frac{1-\alpha}2\eperiod$$
        \cref{eq:MainNegativeARIPInitial} therefore yields
        $$
        \frac{2WD^2}{n^2} \ge \frac{\frac e{\sqrt 2}\cdot \sqrt{1-\alpha}\inparen{\log\inparen{k(1-\alpha)}-\log{\sqrt 2}}}{Dn}\eperiod
        $$
        \cref{eq:MainNegativeARIPNeedTOShow} follows since
        $$\frac{\sum_{i=1}^m \norm{A_{i\Bigast }}_1^2}{\norm A_F^2} = \frac Wn \ge \Omega\inparen{\frac {\sqrt{1-\alpha}\log\inparen{k(1-\alpha)}}{D^3}}\eperiod \qedhere$$
	\end{proof}

    The final tool needed for the proof of \cref{thm:ell2ARIP} is the following \emph{row removal lemma}. Notice that ARIP is preserved by the removal of columns and addition of rows. In other words, removing rows makes it ``harder'' for a matrix to be ARIP, while removing columns makes it ``easier''. In \cref{lem:RowRemovalLemma} we show it is possible to remove any ``not too heavy'' set of rows and preserve ARIP, if one also removes a suitable set of columns.
	\begin{lemma}[Row removal lemma]\label{lem:RowRemovalLemma}
		Let $A\in \R^{m\times n}$ be $(k,D)$-$\ell_2$-ARIP. Fix $k' \le k$ and $\delta > 0$. Fix a set $I\subseteq \{1,\dots, m\}$, and let $A_I$ denote the restriction of $A$ to the row set $I$. Then, there exists a column set $J\subseteq \{1,\dots, n\}$ with $$|J| \le \frac{nD^2\norm {A_I}_F^2}{\delta^2\norm {A}_F^2}$$ such that the matrix $A_{I,J}$, defined as the matrix $A$ without the row set $I$ and the column set $J$, is $\inparen{k',\frac{D}{\sqrt{1-k'\delta^2}}}$-$\ell_2$-ARIP.
	\end{lemma}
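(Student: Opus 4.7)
The plan is to pick $J$ so as to remove precisely those columns on which the row-restricted matrix $A_I$ carries too much $\ell_2$-mass, and then to verify the ARIP of $A_{I,J}$ by comparing $\norm{A_{I,J}y}_2$ with $\norm{Ax}_2$ for $x$ obtained from $y$ by zero-padding. Normalize so that $\norm{A}_F = \sqrt n$; then \cref{lem:BasicBounds}(\ref{enum:OneSideARIP}) shows that the ARIP constant $K$ of $A$ satisfies $K\ge 1/D$. Define
\begin{equation*}
J \defeq \inset{j\in \inset{1,\dots,n} : \norm{(A_I)_{\Bigast j}}_2 > K\delta}.
\end{equation*}
Since $|J|\cdot (K\delta)^2 \le \sum_{j\in J}\norm{(A_I)_{\Bigast j}}_2^2 \le \norm{A_I}_F^2$, and $K\ge 1/D$, a direct count gives $|J| \le \norm{A_I}_F^2 / (K\delta)^2 \le \frac{nD^2\norm{A_I}_F^2}{\delta^2\norm{A}_F^2}$, matching the required bound.

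Next I would verify the ARIP conclusion. Take any $y\in \R^{n-|J|}$ with $\Delta_{1,2}(y) \ge \sqrt{(n-|J|)/k'}$ --- equivalently, $\norm y_1 \le \sqrt{k'}\norm y_2$ --- and extend $y$ to $x\in \R^n$ by placing zeros on $J$. Since $\norm x_1 = \norm y_1$, $\norm x_2 = \norm y_2$, and $k'\le k$, we have $\Delta_{1,2}(x) = \Delta_{1,2}(y)\sqrt{n/(n-|J|)} \ge \sqrt{n/k'} \ge \sqrt{n/k}$, so the ARIP of $A$ applies to $x$, giving $K\norm y_2 \le \norm{Ax}_2 \le DK\norm y_2$. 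Because $x$ vanishes on $J$, the coordinates of $A_{I,J}y$ and of $A_I x$ together partition the coordinates of $Ax$, yielding
\begin{equation*}
\norm{A_{I,J}y}_2^2 = \norm{Ax}_2^2 - \norm{A_I x}_2^2.
\end{equation*}
The choice of $J$ bounds every column of $(A_I)_{\bar J}$ (where $\bar J = \inset{1,\dots,n}\setminus J$) by $K\delta$ in $\ell_2$, giving the $\ell_1\to \ell_2$ inequality
\begin{equation*}
\norm{A_I x}_2 = \norm{(A_I)_{\bar J} y}_2 \le \Paren{\max_{j\notin J}\norm{(A_I)_{\Bigast j}}_2}\norm y_1 \le K\delta \norm y_1 \le K\delta\sqrt{k'}\norm y_2.
\end{equation*}

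Combining the previous two displays, $\norm{A_{I,J}y}_2^2 \ge K^2\norm y_2^2 - K^2k'\delta^2\norm y_2^2 = K^2(1-k'\delta^2)\norm y_2^2$, and the upper bound $\norm{A_{I,J}y}_2 \le \norm{Ax}_2 \le DK\norm y_2$ is immediate. Setting $K' \defeq K\sqrt{1-k'\delta^2}$, one obtains $K'\norm y_2 \le \norm{A_{I,J}y}_2 \le \frac{D}{\sqrt{1-k'\delta^2}}K'\norm y_2$, establishing the claimed $(k', D/\sqrt{1-k'\delta^2})$-$\ell_2$-ARIP property. The main conceptual point --- and the only nontrivial step --- is recognizing which columns to remove: the ARIP of the row-restricted matrix is restored exactly by killing the large $\ell_2$-columns of $A_I$, and the distortion hypothesis $\norm y_1 \le \sqrt{k'}\norm y_2$ is precisely what converts the per-column $\ell_2$-threshold $K\delta$ into the sought-for factor $K\delta\sqrt{k'}$, rather than the much larger $K\delta\sqrt{n-|J|}$ that a naive $\ell_2\to\ell_2$ bound would force.
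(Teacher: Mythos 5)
Your proposal is correct and follows essentially the same route as the paper's proof: the same choice of $J$ (columns of $A_I$ with $\ell_2$-norm exceeding $K\delta$), the same counting bound on $|J|$ via $K \ge 1/D$, the same Pythagorean decomposition $\norm{A_{I,J}y}_2^2 = \norm{Ax}_2^2 - \norm{A_I x}_2^2$, and the same $\ell_1$-to-$\ell_2$ column estimate converting the distortion hypothesis into the factor $K\delta\sqrt{k'}$. If anything, your explicit handling of the dimension shift when passing from $y\in\R^{n-|J|}$ to its zero-padding $x\in\R^n$ is slightly more careful than the paper's write-up.
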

	\begin{proof}
		Assume without loss of generality that $\norm A_F = \sqrt n$. Fix $K > 0$ such that 
		$$K\norm{x}_2 \leq \norm{Ax}_2 \leq D \cdot K\norm{x}_2$$
		for all $x\in \R^n$ with $\Delta_{1,2}(x) \ge \inparen{\frac nk}^{1/2}$. By \cref{lem:BasicBounds}(\ref{enum:OneSideARIP}), we have $\frac 1D\le K\le 1$.
		
		Let $J\subseteq \{1,\dots,n\}$ indicate the columns in $A_I$ whose $\ell_2$-norm is larger than $K\delta$. Observe that $$|J|\le \frac{\norm {A_I}_F^2}{\delta^2K^2} = \frac{n\norm {A_I}_F^2}{K^2\delta^2\norm {A}_F^2}\le \frac{nD^2\norm {A_I}_F^2}{\delta^2\norm {A}_F^2}\eperiod$$ 
		
		Define $B$ to be the matrix $A$ without the row set $I$ and the column set $J$. We need to show that $B$ is $\inparen{k',\frac{D}{\sqrt{1-k'\delta^2}}}$-$\ell_2$-ARIP. Let $A_{\overline I}$ denote the restriction of $A$ to the rows indicated by $\{1,\dots,n \}\setminus I$. Let $x\in \R^n$ have $\norm x_2=1$ and $\norm x_1 \le k'$, so that $\Delta_{1,2}(x) \ge \inparen{\frac n{k'}}^{1/2}$. Further assume that $x_j = 0$ for all $j\in J$. Note that it suffices to show that
		\begin{equation}\label{eq:RowRemovalNeedToShow}
			K\sqrt{1-k'\delta^2}\le \norm{A_{\overline I} x}_2 \le KD\eperiod
		\end{equation}
		The right-hand inequality holds since $\Delta_{1,2}(x)\ge \inparen{\frac n{k'}}^{1/2}\ge \inparen{\frac n k}^{1/2}$, implying that $\norm{A_{\overline I} x}_2 \le \norm{Ax}_2\le KD$. For the left-hand inequality of \cref{eq:RowRemovalNeedToShow}, we first note that
		\begin{equation} \label{eq:A_barIViaA_I}
		    \norm{A_{\overline I}x}_2^2 = \norm{Ax}_2^2 - \norm{A_{ I}x}_2^2\ge K^2 - \norm{A_{ I}x}_2^2 \eperiod
		\end{equation}
		Now, let $c_1,\dots, c_n$ denote the columns of $A_{I}$, and recall that $\norm {c_j}_2\le  K\delta$ for all $j\in \{1,\dots, n\}\setminus J$. Consequently,
		\begin{align*}\norm{A_{I}x}_2 &= \norm{\sum_{j=1}^n x_j c_j}_2 \le \sum_{j=1}^n \inabs{x_j} \norm{c_j}_2= \sum_{j\in\inset{1,\dots,n}\setminus J} \inabs{x_j} \norm{c_j}_2\le \norm x_1\cdot \max_{j\in\inset{1,\dots,n}\setminus J}\inset{\norm{c_j}_2} \le \norm x_1\cdot K\delta \\&\le \sqrt {k'}\cdot K\delta \eperiod\end{align*}
The left-hand inequality of \cref{eq:RowRemovalNeedToShow} now follows from the above and \cref{eq:A_barIViaA_I}.
	\end{proof}

        \subsection{Proof of \cref{thm:ell2ARIP}}
        We finally turn to proving \cref{thm:ell2ARIP} using \cref{lem:MainNegativeARIP,lem:RowRemovalLemma}.
        \begin{proof}[Proof of {\cref{thm:ell2ARIP}}]
		Suppose, without loss of generality, that $\norm A_F = \sqrt n$ and that the rows $A_{1\Bigast},\dots, A_{m\Bigast}$ are sorted so that $\Delta_{1,2}(A_{i\Bigast})$ is non-decreasing in $i$. Let $1\le t\le m$  be the minimal integer for which $\sum_{i=1}^t \norm{A_{i\Bigast}}_2^2 \ge n^{1-\eta}$. We take $T$ to be the matrix whose rows are $A_{1\Bigast},\dotsc, A_{t\Bigast}$. By definition, $\frac{\norm T_F^2}{\norm A_F^2} \ge n^{-\eta}$. By \cref{lem:BasicBounds}(\ref{enum:RowBound}), 
		$$n^{1-\eta}\le \sum_{i=1}^t \norm{A_{i\Bigast}}_2^2 \le \frac{tD^2n}k\ecomma$$
		implying that $t \ge \frac k{D^2n^\eta}$. This proves that $T$ satisfies \cref{property:TLargeEll2ARIP,property:TmanyRowsARIP}. 
		
		To prove \cref{property:TDistortedARIP}, it suffices to show that \begin{equation}\label{eq:ell2NeedToShow}
		    \Delta_{1,2}\inparen{A_{(t+1)\Bigast}} \le O\inparen{\inparen{\frac{D^3 n}{\eta \sqrt{1-\alpha}\log n}}^{1/2}}\eperiod
		\end{equation} Indeed, \cref{eq:ell2NeedToShow} implies \cref{eq-arip-distortion} since $\Delta_{1,2}(A_{i\Bigast})$ is non-decreasing in $i$.

          Let $A'\in \R^{(m-t)\times n}$ be the matrix whose rows are $A_{(t+1)\Bigast},\dots,A_{m\Bigast}$. We apply \cref{lem:RowRemovalLemma} to the matrix $A$, with $I = \inset{1,\dotsc, t}$, $k' = \frac{(1-\alpha)n^\eta}{8D^4}$ and $\delta = \sqrt{\frac {1}{2k'}}$. The lemma yields a $\inparen{k', \sqrt 2\cdot  D}$-$\ell_2$-ARIP submatrix $S\in \R^{(m-t)\times (n-w)}$ of $A'$, where $$w \le \frac{nD^2\norm T_F^2}{\delta^2\norm A_F^2} = \frac{D^2\norm T_F^2}{\delta ^2} = 2D^2\norm{T}_F^2k'\eperiod$$
		By the minimality of $t$ and \cref{lem:BasicBounds}(\ref{enum:RowBound}), 
		$$\norm T_F^2 \le n^{1-\eta} + \norm {A_{t\Bigast}}_2^2 \le n^{1-\eta} + \frac{D^2n}k\le 2 n^{1-\eta}\ecomma$$
  where the last step uses the hypothesis $k \ge n^\eta D^2$. We therefore have 
		$$w\le  4 D^2n^{1-\eta} k' = \frac{(1-\alpha)n}{2D^2}\eperiod$$ 
		
		Let $$\alpha' = \frac {\rank(S)}{n-w} \le \frac{\rank(A)}{n-w} = \frac{\alpha n}{n-w}$$ and note that
		$$1-\alpha' \ge 1-\frac{\alpha n}{n-w} \ge 1-\frac{\alpha}{1-\frac{1-\alpha}{2D^2}}\ge 1-\frac{\alpha}{1-\frac{1-\alpha}2}\ge \frac{1-\alpha}2\eperiod$$ Therefore, \cref{lem:MainNegativeARIP} yields
		\begin{equation}\label{eq:SIsDense}		    \frac{\sum_{i=1}^{m-t}\norm{S_{i\Bigast}}_1^2}{\norm S_F^2} \ge \Omega\inparen{\frac{\sqrt{1-\alpha'}\log\inparen{k'(1-\alpha')}}{D^3}} \ge \Omega\inparen{\frac{\sqrt{1-\alpha}\log \inparen{\frac{(1-\alpha)^2n^\eta}{8D^4}}}{D^3}}\ge \Omega\inparen{\frac{\eta \sqrt{1-\alpha}\log n}{D^3}}\eperiod
		\end{equation}
		
		Let $c_1,\dots, c_w\in \R^m$ denote the columns of $A$ that are missing from $S$. By \cref{lem:BasicBounds}(\ref{enum:ColumnBound}), $\norm {c_j}_2 \le D$ for all $1\le j\le w$. Hence,
		$$\norm S_F^2 \ge \norm {A}_F^2 - \norm T_F^2 -\sum_{j=1}^w \norm {c_j}_2^2 \ge n - 2n^{1-\eta} - wD^2\ge n - 2n^{1-\eta} - \frac n2 \ge \frac n4\eperiod$$ 
		Consequently,		
		\begin{align*}\frac{\sum_{i=1}^{m-t} \norm{S_{i\Bigast}}_1^2}{\norm S_F^2} &\le \frac{4 \sum_{i=1}^{m-t}\norm {S_{i\Bigast}}_1^2}{n} \le \frac{4\sum_{i=t+1}^m \norm{A_{i\Bigast}}_1^2}n = \sum_{i={t+1}}^m \frac{4\norm{A_{i\Bigast}}_2^2}{\Delta_{1,2}\inparen{A_{i\Bigast}}^2} \le \sum_{i=t+1}^m \frac{4\norm {A_{i\Bigast}}_2^2}{\Delta_{1,2}(A_{{(t+1)}\Bigast})^2}\\&\le \frac {4\norm A_F^2}{\Delta_{1,2}(A_{(t+1)\Bigast})^2}= \frac{4n}{\Delta_{1,2}\inparen{A_{(t+1)\Bigast}}^2}\eperiod\end{align*}
        \cref{eq:ell2NeedToShow}, which yields \cref{property:TDistortedARIP}, follows from the above and \cref{eq:SIsDense}.
	\end{proof}

	%%%%%%%%%%%%%%%%%%%%%%%%%%%%%%%%%%%%%%%%%%%%%%%%%%%%%%%%%%%%%%%%%%%%%%%%%%%%%%%%
	%%%%%%%%%%%%%%%%%%%%%%%%%%%%%%%%%%%%%%%%%%%%%%%%%%%%%%%%%%%%%%%%%%%%%%%%%%%%%%%%
	%%%%%%%%%%%%%%%%%%%%%%%%%%%%%%%%%%%%%%%%%%%%%%%%%%%%%%%%%%%%%%%%%%%%%%%%%%%%%%%%
	\section{For $p \ne 2$, $\ell_p$-RIP Matrices Must Be Sparse}
	In this section, we prove \cref{ithm:ellp}, which we formally state below.
     \begin{mtheorem}\label{thm:ellp}
    		Let $A \in \R^{m \times n}$ be a $(k, D)$-$\ell_p$-RIP matrix, and let $A_{1 \Bigast}, \dots, A_{m \Bigast}$ denote the rows of $A$. Then, if $1 \leq p < 2$, it holds that
    		\begin{flalign*}
    			D^p \left(\frac{n}{k}\right)^{p(\frac{1}{p} - \frac{1}{2})} \sum_{i = 1}^m  \norm{A_{i \Bigast}}_2^{p} \geq \sum_{i = 1}^m \norm{A_{i \Bigast}}_p^p \enspace,
    		\end{flalign*}
    		and if $p > 2$ it holds that 
    		\begin{flalign*}
    			\sum_{i = 1}^m  \norm{A_{i \Bigast}}_2^{p} \leq D^p  \left(\frac{n}{k}\right)^{p(\frac{1}{2} - \frac{1}{p} )}\sum_{i = 1}^m \norm{A_{i \Bigast}}_p^p \enspace.
    		\end{flalign*}
     \end{mtheorem}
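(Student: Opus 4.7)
The plan is to probe $A$ with two random vectors — a uniformly random standard basis vector $e_j$ and a Gaussian vector supported on a uniformly random size-$k$ subset of $\inset{1,\dotsc,n}$ — and then compare the resulting $p$-th moments using the two sides of the $\ell_p$-RIP inequality. Normalizing the RIP so that $K\norm{x}_p \le \norm{Ax}_p \le DK\norm{x}_p$ on the relevant sparse $x$, the unknown constant $K$ will cancel between the two probes.

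For the first probe, pick $j$ uniformly in $\inset{1,\dotsc,n}$. Since $e_j$ is $1$-sparse with $\norm{e_j}_p=1$, RIP gives $K^p \le \norm{Ae_j}_p^p \le (DK)^p$. Averaging over $j$ and using $\E_j \norm{Ae_j}_p^p = \frac{1}{n}\sum_{i=1}^m \norm{A_{i\Bigast}}_p^p$ yields the two-sided bound $nK^p \le \sum_{i=1}^m \norm{A_{i\Bigast}}_p^p \le n(DK)^p$.

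For the second probe, let $g\in \R^n$ be a $k$-sparse Gaussian vector: choose a uniform subset $S\subseteq \inset{1,\dotsc,n}$ of size $k$ and set $g_j$ to be i.i.d.\ standard Gaussian for $j\in S$ and $g_j=0$ otherwise. Letting $c_p = \E \inabs{h}^p$ for $h$ standard Gaussian, one has $\E\norm{g}_p^p = k c_p$, and conditional on $S$ each coordinate $(Ag)_i = \sum_{j\in S}A_{ij} g_j$ is a centered Gaussian with variance $\sum_{j\in S}A_{ij}^2$, so $\E\norm{Ag}_p^p = c_p \sum_{i=1}^m \E_S \inparen{\sum_{j\in S} A_{ij}^2}^{p/2}$. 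Since $\E_S \sum_{j\in S} A_{ij}^2 = \frac{k}{n}\norm{A_{i\Bigast}}_2^2$, Jensen's inequality applied to $t\mapsto t^{p/2}$ gives $\E_S \inparen{\sum_{j\in S}A_{ij}^2}^{p/2} \le \inparen{\frac{k}{n}}^{p/2}\norm{A_{i\Bigast}}_2^p$ when $p<2$ (concavity) and the reverse inequality when $p>2$ (convexity).

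For $p<2$, the RIP lower bound $K^p\E\norm{g}_p^p \le \E\norm{Ag}_p^p$ combines with the Jensen upper bound to yield $K^p k \le (k/n)^{p/2}\sum_i\norm{A_{i\Bigast}}_2^p$, and substituting into the first-probe upper bound $\sum_i\norm{A_{i\Bigast}}_p^p \le n(DK)^p$ eliminates $K^p$ and produces the claimed inequality. For $p>2$ the roles swap: the RIP upper bound $\E\norm{Ag}_p^p \le (DK)^p \E\norm{g}_p^p$ combined with the convex-Jensen lower bound upper-bounds $\sum_i \norm{A_{i\Bigast}}_2^p$ by $D^p K^p k^{1-p/2} n^{p/2}$, and the first-probe lower bound $nK^p \le \sum_i\norm{A_{i\Bigast}}_p^p$ eliminates $K^p$ to give the claim, noting that $(n/k)^{1-p/2}$ and $(n/k)^{p/2-1}$ match the exponents $p(1/p-1/2)$ and $p(1/2-1/p)$ in the theorem statement. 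No single step is truly difficult — the argument is elementary — but the key idea is choosing the Gaussian $k$-sparse probe, whose $p$-th moment along each row equals $c_p \inparen{\sum_{j\in S}A_{ij}^2}^{p/2}$ and hence translates an $\ell_p$-statement about the rows into a second-moment statement that behaves predictably under random subsampling of coordinates.
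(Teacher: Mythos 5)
Your proposal is correct and takes essentially the same route as the paper's proof: probe with standard basis vectors and with $k$-sparse Gaussian vectors, use the Gaussian moment identity $\E\inabs{(Ag)_i}^p = c_p\inparen{\sum_{j\in S}A_{ij}^2}^{p/2}$ to convert the $\ell_p$ statement into an $\ell_2$ statement about the rows, and apply Jensen/H\"older over the random support to reach the stated exponents. The only cosmetic differences are that you take expectations of the pointwise RIP inequality directly (which is valid by linearity), whereas the paper routes this step through a small claim comparing a ratio $X/Y$ to $\E[X]/\E[Y]$, and that you randomize the support $S$ from the outset rather than first deriving a deterministic inequality for each fixed $S$ and then averaging.
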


	We will need the following simple claim.
	\begin{claim}
		\label{claim:simple2}
		Suppose $X, Y$ are non-negative random variables, and $\Pr[Y = 0] = 0$. Then, $\Pr[ X/Y \leq \E[X]/\E[Y]] > 0$ and $\Pr[ X/Y \geq \E[X]/\E[Y]] > 0$.
	\end{claim}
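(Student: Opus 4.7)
The plan is a short proof by contradiction. Set $c = \E[X]/\E[Y]$, which is well-defined and non-negative since $Y > 0$ almost surely (so $\E[Y] > 0$). Suppose, toward a contradiction, that $\Pr[X/Y \le c] = 0$. Then $X/Y > c$ almost surely, and multiplying by the (a.s.\ positive) random variable $Y$ gives that the random variable $Z := X - cY$ is strictly positive almost surely.

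The next step is to observe that any random variable $Z$ with $Z > 0$ a.s.\ has $\E[Z] > 0$. This follows from the standard decomposition $\{Z > 0\} = \bigcup_{n \ge 1} \{Z \ge 1/n\}$: since the union has probability $1$, at least one event $\{Z \ge 1/n_0\}$ has positive probability, and then $\E[Z] \ge (1/n_0)\Pr[Z \ge 1/n_0] > 0$. Applying this to $Z = X - cY$ yields $\E[X] - c\E[Y] > 0$, which contradicts the choice $c = \E[X]/\E[Y]$ (assuming, as the statement implicitly does, that these expectations are finite). Hence $\Pr[X/Y \le c] > 0$ as required.

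The second inequality $\Pr[X/Y \ge c] > 0$ follows by the same argument with inequalities reversed: assuming $\Pr[X/Y \ge c] = 0$ gives $cY - X > 0$ a.s., and the same measure-theoretic observation yields $\E[X] < c\E[Y]$, again a contradiction. There is essentially no main obstacle here; the only thing one must be mindful of is the mild measure-theoretic step that a strictly-positive-a.s.\ random variable has strictly positive expectation, which is what upgrades the inequality on expectations from weak to strict.
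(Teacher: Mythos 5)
Your proof is correct and is essentially the contrapositive phrasing of the paper's own argument: the paper directly observes that $\E[X-\alpha Y]=0$ forces $\Pr[X-\alpha Y\le 0]>0$, then intersects with $\{Y>0\}$, which is exactly your step that a strictly-positive-a.s.\ variable has strictly positive expectation, run in reverse. No substantive difference.
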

	\begin{proof}
		Let $\alpha = \E[X]/\E[Y]$. Then, $\E[X - \alpha Y] = 0$. So, $\Pr[X - \alpha Y \leq 0] > 0$. Therefore, $\Pr[X - \alpha Y \leq 0 \wedge Y > 0] > 0$, and so $\Pr[X/Y \leq \alpha] > 0$. Similarly, $\Pr[X - \alpha Y \geq 0] > 0$, and so $\Pr[X - \alpha Y \geq 0 \wedge Y > 0] > 0$, which implies $\Pr[X/Y \geq \alpha] > 0$. This finishes the proof.
	\end{proof}
	
	\begin{proof}[Proof of \cref{thm:ellp}]
		For $j \in [n]$, let $e_j$ denote the $j$-th standard basis vector. Without loss of generality, we shall assume that $A$ satisfies \cref{def:RIP} with $K = 1$; otherwise, we can rescale $A$ so that this holds.
		Let $A_{\Bigast 1}, \dots, A_{\Bigast n}$ be the columns of $A$. Observe that $\norm{A e_j}_p = \norm{A_{\Bigast j}}_p$ for all $j \in [n]$. As $k \geq 1$, we thus have that $1 \leq \norm{A_{\Bigast j}}_p \leq D$, for all $j \in [n]$. It thus follows that $n \leq \sum_{i = 1}^m \norm{A_{i \Bigast}}_p^p = \sum_{j = 1}^n \norm{A_{\Bigast j}}_p^p \leq n D^p$.
		
		Now, let $S \subseteq [n]$, $\abs{S} \leq k$. For $j \in S$, let $g_j \sim N(0, 1)$, and let $x =\sum_{j \in S} g_j e_j$. Note that $x \in \R^S$. We observe that $\norm{Ax}_p^p$ and $\norm{x}_p^p$ are nonnegative random variables, and $\Pr[\norm{x}_p^p = 0] = 0$.
		
		Next, we note that if $g \sim N(0, 1)$, then $\E[\abs{g}^p] = \frac{2^{p/2}}{\sqrt{\pi}} \cdot \Gamma(\frac{1 + p}{2}) =: f(p)$. By linearity of expectation, it then follows that $\E[\norm{x}_p^p] = f(p) \abs{S}$, and that $\E[\norm{Ax}_p^p] = f(p) \sum_{i = 1}^m \norm{A_{i, S}}_2^p$, where $A_{i, S}$ denotes the $i$-th row restricted to the coordinates in $S$.
		
		We now have two cases.
		
		\parhead{Case 1: $p < 2$.}
		Applying \cref{claim:simple2}, we see that there exists $y \in \R^S \setminus \{0^S\}$ such that $\norm{Ay}_p^p/\norm{y}_p^p \leq \E[\norm{Ax}_p^p]/\E[\norm{x}_p^p]$.
		
		It follows that there exists $y \in \R^S \setminus \{0^S\}$ with $\norm{y}_p = 1$ such that $\norm{Ay}_p^p \leq \sum_{i = 1}^m \norm{A_{i, S}}_2^p/\abs{S}$.
		On the other hand, because $y$ is $k$-sparse, we see that $\norm{Ay}_p^p \geq 1$. Hence, 
		\begin{equation}
			\sum_{i = 1}^m \norm{A_{i, S}}_2^p \geq \abs{S} \label{eq:main} \enspace,
		\end{equation}
		for every $S$ of size at most $k$.
		
		Now, fix $i$, and let $X$ denote the random variable $\norm{A_{i, S}}_2^p$, with randomness over the draw of $S \subseteq [n]$, $\abs{S} = k$. By H\"{o}lder's inequality (and using that $p < 2$), we have $\E[X] \leq \E[X^{2/p}]^{p/2}$. Now, $\E[X^{2/p}] = \E_S[\norm{A_{i, S}}_2^2] = \frac{k}{n} \norm{A_{i \Bigast}}_2^2$. This is because each coordinate of $n$ appears in a randomly chosen $S$ with probability $\frac{k}{n}$. It thus follows that $\E[X] \leq (k/n)^{p/2} \norm{A_{i \Bigast}}_2^{p}$.
		
		Taking expectations of \cref{eq:main} over the choice of $\abs{S} = k$, we now have that 
		\begin{flalign*}
			(k/n)^{p/2} \sum_{i = 1}^m  \norm{A_{i \Bigast}}_2^{p} \geq k \enspace.
		\end{flalign*}
		Combining with the inequality $\sum_{i = 1}^m \norm{A_{i \Bigast}}_p^p  \leq n D^p$, we thus have
		\begin{flalign*}
			&(k/n)^{p/2} \sum_{i = 1}^m  \norm{A_{i \Bigast}}_2^{p} \geq k \geq \frac{k}{n} D^{-p} \sum_{i = 1}^m \norm{A_{i \Bigast}}_p^p \\
			&\implies D^p \left(\frac{n}{k}\right)^{p(\frac{1}{p} - \frac{1}{2})} \sum_{i = 1}^m  \norm{A_{i \Bigast}}_2^{p} \geq \sum_{i = 1}^m \norm{A_{i \Bigast}}_p^p \enspace,
		\end{flalign*}
		as required.
		
		\parhead{Case 2: $p > 2$.} Applying \cref{claim:simple2}, we see that exists $y \in \R^S \setminus \{0^S\}$ such that $\norm{Ay}_p^p/\norm{y}_p^p \geq \E[\norm{Ax}_p^p]/\E[\norm{x}_p^p]$.
		
		It follows that there exists $y \in \R^S \setminus \{0^S\}$ with $\norm{y}_p = 1$ such that $\norm{Ay}_p^p \geq \sum_{i = 1}^m \norm{A_{i, S}}_2^p/\abs{S}$.
		On the other hand, because $y$ is $k$-sparse, we see that $\norm{Ay}_p^p \leq D^p$. Hence, 
		\begin{equation}
			\sum_{i = 1}^m \norm{A_{i, S}}_2^p \leq \abs{S} D^p \label{eq:main2} \enspace,
		\end{equation}
		for every $S$ of size at most $k$.
		
		Now, fix $i$, and let $X$ denote the random variable $\norm{A_{i, S}}_2^p$, with randomness over the draw of $S \subseteq [n]$, $\abs{S} = k$. By H\"{o}lder's inequality (and using that $p > 2$), we have $\E[X] \geq \E[X^{2/p}]^{p/2}$. Now, $\E[X^{2/p}] = \E_S[\norm{A_{i, S}}_2^2] = \frac{k}{n} \norm{A_{i \Bigast}}_2^2$. This is because each coordinate of $n$ appears in a randomly chosen $S$ with probability $\frac{k}{n}$. It thus follows that $\E[X] \geq (k/n)^{p/2} \norm{A_{i \Bigast}}_2^{p}$.
		
		Taking expectations of \cref{eq:main2} over the choice of $\abs{S} = k$, we now have that 
		\begin{flalign*}
			(k/n)^{p/2} \sum_{i = 1}^m  \norm{A_{i \Bigast}}_2^{p} \leq k D^p \enspace.
		\end{flalign*}
		Combining with the inequality $\sum_{i = 1}^m \norm{A_{i \Bigast}}_p^p  \geq n$, we thus have
		\begin{flalign*}
			&(k/n)^{p/2} \sum_{i = 1}^m  \norm{A_{i \Bigast}}_2^{p} \leq k D^p \leq \frac{k}{n} D^p \sum_{i = 1}^m \norm{A_{i \Bigast}}_p^p \\
			&\implies  \sum_{i = 1}^m  \norm{A_{i \Bigast}}_2^{p} \leq D^p  \left(\frac{n}{k}\right)^{p(\frac{1}{2} - \frac{1}{p} )}\sum_{i = 1}^m \norm{A_{i \Bigast}}_p^p \enspace,
		\end{flalign*}
		as required.
	\end{proof}

	%%%%%%%%%%%%%%%%%%%%%%%%%%%%%%%%%%%%%%%%%%%%%%%%%%%%%%%%%%%%%%%%%%%%%%%%%%%%%%%%
	%%%%%%%%%%%%%%%%%%%%%%%%%%%%%%%%%%%%%%%%%%%%%%%%%%%%%%%%%%%%%%%%%%%%%%%%%%%%%%%%
	%%%%%%%%%%%%%%%%%%%%%%%%%%%%%%%%%%%%%%%%%%%%%%%%%%%%%%%%%%%%%%%%%%%%%%%%%%%%%%%%
	\printbibliography
	%%%%%%%%%%%%%%%%%%%%%%%%%%%%%%%%%%%%%%%%%%%%%%%%%%%%%%%%%%%%%%%%%%%%%%%%%%%%%%%%
	%%%%%%%%%%%%%%%%%%%%%%%%%%%%%%%%%%%%%%%%%%%%%%%%%%%%%%%%%%%%%%%%%%%%%%%%%%%%%%%%
	%%%%%%%%%%%%%%%%%%%%%%%%%%%%%%%%%%%%%%%%%%%%%%%%%%%%%%%%%%%%%%%%%%%%%%%%%%%%%%%%
\end{document}